\newcommand{\args}[1]{\vec{#1}}
\newcommand{\bwedge}[1]{\bigwedge_{#1}}
\newcommand{\lvar}{\mathrm{lvar}}
\newcommand{\Ind}{\mathcal{I}}
\newcommand{\V}{\mathcal{V}}
\newcommand{\T}{\mathcal{T}}
\newcommand{\MF}{\mathcal{F}}
\newcommand{\MR}{\mathcal{P}}
\newcommand{\norm}[1]{ {#1}\!\!\downarrow}
\newcommand{\sat}{\vDash}
\newcommand{\unsat}{\not \vDash}
\newcommand{\imp}{\rightarrow}
\newcommand{\R}[1]{\mu^{-1}_{#1}}
\newcommand{\Proj}[1]{\mu_{#1}}
\newcommand{\Refine}{ \operatorname{Ref}}
\newcommand{\refine}{ \Rightarrow_{\operatorname{Ref}}}
\newcommand{\mo}{ \Rightarrow_{\operatorname{MO}}}
\newcommand{\li}{ \Rightarrow_{\operatorname{LI}}}
\newcommand{\sh}{ \Rightarrow_{\operatorname{SH}}}
\newcommand{\des}[3]{ \Rightarrow^{#3}_{#1#2}}
\newcommand{\apr}{ \Rightarrow_{\operatorname{AP}}}
\newcommand{\anc}{ \Rightarrow_{\operatorname{A}}}
\newcommand{\I}[1]{\mathcal{I}_{#1}}
\newcommand{\pos}{\mathrm{pos}}
\newcommand{\sel}{\mathrm{sel}}
\newcommand{\mgu}{\mathrm{mgu}}  
\newcommand{\skt}{\mathrm{skt}}
\newcommand{\ground}[1]{\mathcal{G}(#1)} 
\newcommand{\vars}{\operatorname{vars}}
\newcommand{\shortrules}[6]{\noindent\begin{minipage}{#6ex}{\bfseries #1}\end{minipage} $\;$ #2 $\;\Rightarrow_{\operatorname{#5}}\;$ #3 \par\smallskip\noindent #4} 
\begin{document}

\title{Decidability of the Monadic Shallow Linear First-Order Fragment with Straight Dismatching Constraints}

\author{Andreas Teucke\inst{1,2} \and Christoph Weidenbach\inst{1}}

\institute{Max-Planck Institut f\"ur Informatik, Saarland Informatics Campus,
	66123 Saarbr\"ucken
	Germany \and Graduate School of Computer Science, Saarbr\"ucken, Germany
}

\maketitle

\begin{abstract}
The monadic shallow linear Horn fragment is well-known to be decidable and
has many application, e.g., in security protocol analysis, tree automata,
or abstraction refinement. It was a long standing open problem how to extend
the fragment to the non-Horn case, preserving decidability, that would, e.g., enable to express non-determinism
in protocols. We prove decidability of the non-Horn monadic shallow linear
fragment via ordered resolution further extended with dismatching constraints
and discuss some applications of the new decidable fragment.
\end{abstract}

\section{Introduction} \label{sec:intro}

Motivated by the automatic analysis of security protocols,
the monadic shallow linear Horn (MSLH) fragment was shown to be decidable in \cite{Weidenbach99cade}.
In addition to the restriction to monadic Horn clauses, the main restriction of the fragment
is positive literals of the form $S(f(x_1,\ldots,x_n))$ or $S(x)$ where all $x_i$ are different, i.e., all terms are shallow
and linear. The fragment can be finitely saturated by superposition (ordered resolution) where negative literals
with non-variable arguments are always selected. As a result, productive clauses with respect to the 
superposition model operator $\I{N}$ have the form $S_1(x_1),\ldots,S_n(x_n) \rightarrow S(f(x_1,\ldots,x_n))$.
Therefore, the models of saturated MSLH clause sets can both be represented by tree automata \cite{tata2007} and shallow linear
sort theories \cite{JacquemardMeyerEtAl98}. The models are typically infinite. The decidability result of MSLH clauses was rediscovered in the
context of tree automata research \cite{GoubaultLarrecq05IPL} where in addition DEXPTIME-completeness of the MSLH
fragment was shown. The fragment was further extended by disequality constraints \cite{journalsiplSeidlR11,Seidl12F} still motivated
by security protocol analysis \cite{SeidlVerma2007}. Although from a complexity point of view, the difference between
Horn clause fragments and the respective non-Horn clause fragments is typically reflected by membership in the deterministic vs.\ the
non-deterministic respective complexity fragment, for monadic shallow linear clauses so far there was no decidability 
result for the non-Horn case.

The results of this paper close this gap. We show the monadic shallow linear non-Horn (MSL) clause fragment
to be decidable by superposition (ordered resolution). From a security protocol application point of view,
non-Horn clauses enable a natural representation of non-determinism. Our second extension to the fragment are
unit clauses with disequations of the form $s\not\approx t$, where $s$ and $t$ are not unifiable. Due to the
employed superposition calculus, such disequations do not influence saturation of an MSL clause set, but have an effect
on potential models. They can rule out identification of syntactically different ground terms as it is, e.g., desired in the security
protocol context for syntactically different messages or nonces. Our third extension to the fragment 
are straight dismatching constraints. These constraints are incomparable to the disequality constraints mentioned above \cite{journalsiplSeidlR11,Seidl12F}.
They do not strictly increase the expressiveness of the MSL theory, but enable up to exponentially more compact
saturations. For example, the constrained clause\newline
\centerline{$(S(x), T(y) \rightarrow S(f(x,y)); y\neq f(x',f(a,y')))$}
over constants $a, b$ describes the same set of ground clauses as the six unconstrained clauses\newline
\centerline{$S(x), T(a) \rightarrow S(f(x,a)) \qquad S(x), T(b) \rightarrow S(f(x,b)) \qquad \ldots$}
\centerline{$S(x), T(f(b,y')) \rightarrow S(f(x,f(b,y')))$} 
\centerline{$S(x), T(f(f(x'',y''),y')) \rightarrow S(f(x,f(f(x'',y''),y'))$.} 
Furthermore, for a  satisfiability equivalent transformation into MSL clauses,
the nested terms in the positive literals would have to be factored out by the introduction of further predicates and clauses.
E.g., the first clause is replaced by the two MSL clauses $S(x), T(a), R(y) \rightarrow S(f(x,y))$ and $R(a)$ where
$R$ is a fresh monadic predicate. The constrained clause belongs to the MSL(SDC) fragment.
Altogether, the resulting MSL(SDC) fragment is shown to be decidable in Section~\ref{sec:decide}.

The introduction of straight dismatching constraints (SDCs) enables an improved refinement step of our
approximation refinement calculus~\cite{Teucke2015}. Before, several clauses were needed to rule
out a specific instance of a clause in an unsatisfiable core.
For example, if due to a linearity approximation from clause $S(x), T(x) \rightarrow S(f(x,x))$ to
$S(x), T(x), S(y), T(y) \rightarrow S(f(x,y))$ an instance $\{x\mapsto f(a,x')$, $y\mapsto f(b,y')\}$ is
used in the proof, before \cite{Teucke2015} several clauses were needed to replace $S(x), T(x) \rightarrow S(f(x,x))$ in
a refinement step in order to rule out this instance. With straight dismatching constraints the clause
$S(x), T(x) \rightarrow S(f(x,x))$ is replaced by the  two clauses
$S(f(a,x)), T(f(a,x)) \rightarrow S(f(f(a,x),f(a,x)))$ and $(S(x), T(x) \rightarrow S(f(x,x)); x\neq f(a,y))$.
For the improved approximation refinement approach (FO-AR) presented in this paper,
any refinement step results in just two clauses, see Section~\ref{sec:approx}. The additional expressiveness of constraint clauses comes
almost for free, because necessary computations, like, e.g., checking emptiness of SDCs, can all be done in polynomial time,
see Section~\ref{sec:prelim}.

In addition to the extension of the known MSLH decidability result and the improved approximation refinement calculus FO-AR,
we discuss in Section~\ref{sec:experiments} the potential of the MSL(SDC) fragment in the context of
FO-AR, Theorem~\ref{theo:refinement:scfoar}, and its prototypical implementation in SPASS-AR (\url{http://www.mpi-inf.mpg.de/fileadmin/inf/rg1/spass-ar.tgz}).
It turns out that for clause sets containing certain structures, FO-AR
is superior to ordered resolution/superposition~\cite{BachmairGanzinger94b} and instance generating methods~\cite{Korovin13ganzinger}. 
The paper ends with a discussion
on challenges and future research directions, Section~\ref{sec:conclusion}.

\section{First-Order Clauses with Straight Dismatching Constraints:  MSL(SDC)} \label{sec:prelim}

We consider a standard first-order language where
letters $v,w,x,$ $y,z$ denote variables, $f,g,h$ functions, $a,b,c$ constants, $s,t$ terms, $p,q,r$ positions and Greek letters $\sigma,\tau,\rho,\delta$ are used for substitutions. 
$S,P,Q,R$ denote predicates, $\approx$ denotes equality, $A,B$ atoms, $E,L$ literals, $C,D$ clauses, $N$ clause sets  and $\V$ sets of variables.
$\overline L$ is the complement of $L$.
The signature $\Sigma=(\MF,\MR)$ consists of two disjoint, non-empty, in general infinite sets of  function and predicate symbols
$\MF$ and $\MR$, respectively.
The set of all \emph{terms} over variables $\V$ is $\T(\MF,\V)$.
If there are no variables, then terms, literals and clauses are called \emph{ground}, respectively.
A \emph{substitution} $\sigma$ is denoted by pairs $\{x \mapsto t\}$ and its update at $x$  by  $\sigma[ x \mapsto t]$.
A substitution $\sigma$ is a \emph{grounding} substitution for $\V$ if $x\sigma$ is ground for every variable $x \in \V$.

The set of \emph{free} variables of an atom $A$ (term $t$) denoted by $\vars(A)$ ($\vars(t)$).
A \emph{position} is a sequence of positive integers, where $\varepsilon$ denotes the empty position.
As usual $t\vert_p = s$ denotes the subterm $s$ of $t$ at position $p$, which we also write as $t[s]_p$, 
and $t[p/s']$ then denotes the replacement of $s$ with $s'$ in $t$ at position $p$.
These notions are extended to literals and multiple positions.

A predicate with exactly one argument is called \emph{monadic}. 
A term is \emph{complex} if it is not a variable and \emph{shallow} if it has at most depth one. 
It is called \emph{linear} if there are no duplicate variable occurrences. 
A literal, where every argument term is shallow, is also called \emph{shallow}. 
A variable and a constant are called \emph{straight}.
A term $f(s_1,\ldots,s_n)$ is called \emph{straight}, if $s_1,\ldots,s_n$ are different variables except for at most one straight term $s_i$. 

A \emph{clause} is a multiset of literals which we write as an implication $\Gamma \imp \Delta$ 
where the atoms in the multiset $\Delta$ (the\emph{ succedent}) denote the positive literals and the atoms in the multiset $\Gamma$ (the \emph{antecedent}) the negative literals.
We write $\square$ for the empty clause.
If $\Gamma$ is empty we omit $\imp$, e.g., we can write $P(x)$ as an alternative of $\imp P(x)$.
We abbreviate disjoint set union with sequencing, for example,  we write $\Gamma,\Gamma' \imp \Delta,L$ instead 
of $\Gamma \cup \Gamma' \imp \Delta \cup \{L\}$.
A clause $E,E,\Gamma \imp \Delta$ is equivalent to $E,\Gamma \imp \Delta$ and we call them equal \emph{modulo duplicate literal elimination}. 
If every term in $\Delta$ is shallow, the clause is called \emph{positive shallow}.
If all atoms in $\Delta$ are linear and variable disjoint, the clause is called \emph{positive linear}.
A clause $\Gamma \imp \Delta$ is called an \emph{MSL} clause, if it is (i)~positive shallow and linear, (ii)~all occurring predicates are monadic,
(iii)~no equations occur in $\Delta$, and (iv)~no equations occur in $\Gamma$ or $\Gamma = \{s\approx t\}$ and $\Delta$ is empty
where $s$ and $t$ are not unifiable.
\emph{MSL} is the first-order clause fragment consisting of MSL clauses. 
Clauses $\Gamma,s\approx t \imp \Delta$ where $\Gamma$, $\Delta$ are non-empty and $s,t$ are not unifiable 
could be added to the MSL fragment without changing any of our results.
Considering the superposition calculus, it will select $s\approx t$.
Since the two terms are not unifiable, no inference will take place on such a clause and the clause will not
contribute to the model operator. 
In this sense such clauses do not increase the expressiveness of the fragment.

An \emph{atom ordering} $\prec$ is an irreflexive, well-founded, total ordering on ground atoms.  
It is lifted to literals by representing $A$ and $\neg A$ as multisets $\{A\}$ and $\{A,A\}$, respectively.
The multiset extension of the literal ordering induces an ordering on ground clauses.
The clause ordering is compatible with the atom ordering; if the maximal atom in $C$ is greater than the maximal atom in $D$ then $D \prec C$.
We use  $\prec$ simultaneously to denote an atom ordering and its multiset, literal, and clause extensions.
For a ground clause set $N$ and clause $C$, the set $N^{\prec C}=\{D \in N \mid D \prec C\}$ denotes the clauses of $N$ smaller than $C$.

A \emph{Herbrand interpretation} $\I{}$ is a - possibly infinite - set of ground atoms.
A ground atom $A$ is called \emph{true} in $\I{}$ if $A\in\I{}$ and \emph{false}, otherwise.
$\I{}$ is said to \emph{satisfy} a ground clause $C= \Gamma \imp \Delta$, denoted by $\I{}\sat C$, if   $\Delta \cap \I{} \neq \emptyset$ or $\Gamma \not\subseteq \I{}$.
A non-ground clause $C$ is satisfied by $\I{}$ if $\I{}\sat C\sigma$ for every grounding substitution $\sigma$.
An interpretation $\I{}$ is called a \emph{model} of $N$, $\I{}\sat N$, if $\I{}\sat C$ for every $C\in N$. 
A model $\I{}$ of $N$ is considered \emph{minimal} with respect to set inclusion, i.e., if there is no model $\I{}'$ with $\I{}'\subset \I{}$ and $\I{}'\sat N$.
A set of clauses $N$ is \emph{satisfiable}, if there exists a model that satisfies $N$. 
Otherwise, the set is \emph{unsatisfiable}.

A disequation $t \neq s$ is an \emph{atomic straight dismatching constraint}
if $s$ and $t$ are variable disjoint terms and $s$ is straight.
A straight dismatching constraint $\pi$ is a conjunction of atomic straight dismatching constraints.
Given a substitution $\sigma$, $\pi\sigma= \bwedge{i\in I} ~~ t_i\sigma \neq s_i $.
$\lvar(\pi) := \bigcup_{i \in I} \vars(t_i)$ are the left-hand variables of $\pi$
and the depth of $\pi$  is the maximal term depth of the $s_i$.
A \emph{solution} of $\pi$ is a grounding substitution $\delta$  
such that for all $i\in I$, $t_i\delta$ is not an instance of $s_i$,
i.e., there exists no $\sigma$ such that $t_i\delta = s_i\sigma$.
A dismatching constraint is solvable if it has a solution and unsolvable, otherwise.
Whether a straight dismatching constraint is solvable, is decidable in linear-logarithmic time \cite{DBLP:conf/cade/TeuckeW16}.
$\top$ and $\bot$ represent the true and false dismatching constraint, respectively. 

 We define constraint normalization $\norm{\pi}$ as the normal form of the following rewriting rules over straight dismatching constraints.
 
 \shortrules{\ }{$ ~\pi \wedge f(t_1,\ldots,t_n)\neq y  $~~~~~~~~~~~~~~~}{$\bot$}{}{}{10}
 \shortrules{\ }{$ \pi \wedge f(t_1,\ldots,t_n)\neq f(y_1,\ldots,y_n)  $}{$\bot$}{}{}{10}
 \shortrules{\ }{$ \pi \wedge f(t_1,\ldots,t_n)\neq f(s_1,\ldots,s_n) $}{$\pi \wedge t_i \neq s_i$  ~~if $s_i$ is complex}{}{}{10}
 \shortrules{\ }{$ \pi \wedge f(t_1,\ldots,t_n)\neq g(s_1,\ldots,s_m)  $}{$\pi$}{}{}{10}
 \shortrules{\ }{$\pi \wedge  x\neq s \wedge x\neq s\sigma $~~~~~~~~~~~~~~~~~\;}{$\pi \wedge  x\neq s $  }{}{}{10}
Note that $f(t_1,\ldots,t_n)\neq f(s_1,\ldots,s_n)$ normalizes to $t_i \neq s_i$ for some $i$,
 where $s_i$ is the one straight complex argument of $f(s_1,\ldots,s_n)$.
Furthermore, the depth of $\norm{\pi}$ is less or equal to the depth of $\pi$ and both have the same solutions.

A pair of a clause and a constraint  $(C;\pi)$ is called a \emph{constrained clause}.
Given a substitution $\sigma$, $(C;\pi)\sigma = (C\sigma;\pi\sigma)$.
$C\delta$ is called a ground clause of $(C;\pi)$ if $\delta$ is a solution of $\pi$. 
$\ground{(C;\pi)}$  is the set of ground instances of $(C;\pi)$. 
If $\ground{(C;\pi)}\subseteq \ground{(C';\pi')}$, then $(C;\pi)$ is an instance of $(C';\pi')$.
If $\ground{(C;\pi)} = \ground{(C';\pi')}$, then $(C;\pi)$ and $(C';\pi')$ are called variants.
A Herbrand interpretation $\I{}$ satisfies $(C;\pi)$, if  $\I{}\sat \ground{(C;\pi)}$.
A constrained clause $(C;\pi)$ is called \emph{redundant} in $N$ 
if for every $D \in \ground{(C;\pi)}$, 
there exist $D_1,\ldots,D_n$ in $\ground{N}^{\prec D}$ such that $D_1,\ldots,D_n \sat D$.
A constrained clause $(C';\pi')$ is called a \emph{condensation} of $(C;\pi)$ if   $C' \subset C $ and
there exists a substitution $\sigma$ such that, $\pi\sigma= \pi'$, $\pi' \subseteq \pi$, and
for all $L \in C$ there is an $L'\in C'$ with $L\sigma=L'$.
A finite unsatisfiable subset of $\ground{N}$ is called an unsatisfiable core of $N$.
 	
An MSL clause with straight dismatching constraints is called an \emph{MSL(SDC)} clause 
with MSL(SDC) being the respective first-order fragment. 
Note that any clause set $N$ can be transformed into an equivalent constrained clause set 
by changing each $C\in N$ to $(C;\top)$.

\section{Decidability of the MSL(SDC) fragment}\label{sec:decide}

In the following we will show that the satisfiability of the MSL(SDC) fragment is decidable.
For this purpose we will define ordered resolution with selection on constrained clauses \cite{DBLP:conf/cade/TeuckeW16} and
show that with an appropriate ordering and selection function, saturation of an MSL(SDC) clause set terminates.

For the rest of this section we assume an atom ordering $\prec$ such that a literal $\neg Q(s)$ is not greater than a literal $P(t[s]_p)$, 
where $p \neq \varepsilon$. For example, a KBO where all symbols have weight one
has this property.

\begin{definition}[sel]\label{def:decide:sel}
Given an MSL(SDC) clause  $(C;\pi) = (S_1(t_1),\dots,S_n(t_n) \imp P_1(s_1),\dots, P_m(s_m);\pi)$. 
The Superposition Selection function $\sel$ is defined by $S_i(t_i)\in \mathrm{sel}(C)$ if
(1) $t_i$ is not a variable or
(2) $t_1,\ldots,t_n$ are variables and $t_i \notin \vars(s_1,\dots,s_m)$ or
(3) $\{ t_1,\ldots,t_n\} \subseteq \vars(s_1,\dots,s_m)$ and for some $1 \leq j \leq m$, $s_j=t_i$.
\end{definition}

The selection function $\sel$ (Definition \ref{def:decide:sel}) ensures that a clause $\Gamma \imp \Delta$ can only be resolved on a positive literal 
if $\Gamma$ contains only variables, which also appear in $\Delta$ at a non-top position.
For example:\newline
\centerline{$\begin{array}{r@{\,=\,}l}
\sel(P(f(x)),P(x), Q(z) \imp Q(x),R(f(y)) & \{P(f(x))\}\\
\sel(P(x), Q(z) \imp Q(x),R(f(y))) & \{Q(z)\}\\
\sel(P(x), Q(y) \imp Q(x),R(f(y))) & \{P(x)\}\\
\sel(P(x), Q(y) \imp Q(f(x)),R(f(y))) & \emptyset.\\
\end{array}$}
Note that given an MSL(SDC) clause $(C;\pi) = (S_1(t_1),\dots,S_n(t_n)$ $ \imp P_1(s_1),\dots P_m(s_m);\pi)$,
if some $S_i(t_i)$ is maximal in $C$, then at least one literal is selected. 

\begin{definition}
A literal $A$ is called \emph{[strictly] maximal} in a constrained clause $(C \vee A;\pi)$ 
if and only if there exists a solution $\delta$ of $\pi$ such that for all literals $B$ in $C$,
$B\delta \preceq A\delta~ [ B\delta \prec A\delta ]$.
\end{definition}

\begin{definition}[SDC-Resolution]\label{def:decide:resolution}
$$ { \frac{( \Gamma_1 \imp \Delta_1, A ~;~\pi_1) \qquad ( \Gamma_2 , B \imp \Delta_2 ~;~\pi_2)}{ ((\Gamma_1,\Gamma_2 \imp \Delta_1,\Delta_2)\sigma~; ~\norm{(\pi_1 \wedge \pi_2)\sigma}) }}~~~, \text{if} $$
\begin{tabular}{rlrl}
  1. & $\sigma=\mgu(A,B)$ & 2. & $\norm{(\pi_1 \wedge \pi_2)\sigma}$ is solvable\\
  3. & \multicolumn{3}{l}{$A\sigma$ is strictly maximal in $(\Gamma_1 \imp \Delta_1, A;\pi_1)\sigma$ and $\sel(\Gamma_1 \imp \Delta_1, A)=\emptyset$}\\
  4. & $B \in \sel(\Gamma_2 , B \imp \Delta_2)$ & & \\
  5. & \multicolumn{3}{l}{$\sel(\Gamma_2 , B \imp \Delta_2)=\emptyset$ and $\neg B\sigma$ maximal in $(\Gamma_2 , B \imp \Delta_2;\pi_2)\sigma$}\\
\end{tabular}
\end{definition}

\begin{definition}[SDC-Factoring]\label{def:decide:factoring}
$$ {\frac{(\Gamma\imp \Delta, A, B ~;~ \pi) }{ ((\Gamma\imp \Delta, A)\sigma; \norm{\pi\sigma})}} ~~~, \text{if}$$
\begin{tabular}{rl@{$\quad$}rl}
  1. &  $\sigma=\mgu(A,B)$ &
 2. &  $\sel(\Gamma\imp \Delta, A, B)=\emptyset$\\
3. &  $A\sigma$ is maximal in $(\Gamma\imp \Delta, A, B;\pi)\sigma$ &
4. &  $\norm{\pi\sigma}$  is solvable\\
\end{tabular}
\end{definition}

Note that while the above rules do not operate on equations,
we can actually allow unit clauses that consist of non-unifiable disequations, i.e.,
clauses $s \approx t \imp$ where $s$ and $t$ are not unifiable.
There are no potential superposition inferences on such clauses as long as there
are no positive equations. So resolution and factoring suffice for completeness.
Nevertheless, clauses such as $s \approx t \imp$ affect the models of satisfiable problems. 
Constrained Resolution and Factoring are sound.

\begin{lemma}[Soundness]
	SDC-Resolution and SDC-Factoring are sound. 
\end{lemma}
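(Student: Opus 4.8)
The plan is to establish soundness of each rule separately by showing that the conclusion's ground instances are all entailed by the ground instances of the premises. Soundness here means: if a Herbrand interpretation $\I{}$ satisfies all premises (i.e. satisfies every ground instance arising from solutions of the respective constraints), then it satisfies the conclusion. The essential point is that the constraint machinery only restricts which ground instances we must consider, so the argument reduces to the standard soundness of ordinary resolution and factoring on the ground clauses that the constrained conclusion actually denotes.

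For SDC-Resolution, I would first take an arbitrary ground clause $D$ of the conclusion $((\Gamma_1,\Gamma_2 \imp \Delta_1,\Delta_2)\sigma; \norm{(\pi_1\wedge\pi_2)\sigma})$, so $D = (\Gamma_1,\Gamma_2 \imp \Delta_1,\Delta_2)\sigma\delta$ for some solution $\delta$ of $\norm{(\pi_1\wedge\pi_2)\sigma}$. Since normalization preserves solutions, $\delta$ is also a solution of $(\pi_1\wedge\pi_2)\sigma$, hence $\sigma\delta$ is a solution of both $\pi_1$ and $\pi_2$. Therefore $(\Gamma_1 \imp \Delta_1, A)\sigma\delta$ is a ground instance of the first premise and $(\Gamma_2, B \imp \Delta_2)\sigma\delta$ is a ground instance of the second, so both are satisfied by $\I{}$. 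Because $\sigma = \mgu(A,B)$ we have $A\sigma\delta = B\sigma\delta$, and the usual ground resolution step on this shared atom yields $\I{} \sat D$. As $D$ was arbitrary, $\I{}$ satisfies the conclusion.

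For SDC-Factoring the argument is analogous but simpler: any ground instance $D = (\Gamma \imp \Delta, A)\sigma\delta$ of the conclusion comes from a solution $\delta$ of $\norm{\pi\sigma}$, hence of $\pi\sigma$, so $\sigma\delta$ solves $\pi$ and $(\Gamma \imp \Delta, A, B)\sigma\delta$ is a ground instance of the premise, satisfied by $\I{}$. Since $A\sigma\delta = B\sigma\delta$, the literal $B\sigma\delta$ is redundant and $\I{} \sat D$ follows immediately.

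The only subtlety — and the one place worth stating explicitly rather than waving through — is the interaction between normalization and solutions: one must invoke the earlier remark that $\norm{\pi}$ and $\pi$ have the same set of solutions and that constraint application commutes with composition, i.e. $\pi(\sigma\delta) = (\pi\sigma)\delta$, so that a solution of the normalized conclusion constraint lifts back to a simultaneous solution of $\pi_1$ and $\pi_2$ (respectively $\pi$). Given that lifting, soundness is an entirely routine consequence of the soundness of ground resolution and factoring, and no maximality, selection, or solvability side condition plays any role — those conditions affect only completeness and termination, not soundness.
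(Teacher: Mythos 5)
Your proof is correct and takes essentially the same route as the paper's: lift a solution $\delta$ of the conclusion's constraint back to a solution $\sigma\delta$ of the premise constraints, conclude that the corresponding ground instances are ground instances of the premises, and finish by ground resolution/factoring on the shared atom $A\sigma\delta = B\sigma\delta$. If anything, you are slightly more careful than the paper, which silently works with the un-normalized constraint $(\pi_1 \wedge \pi_2)\sigma$, whereas you explicitly invoke the fact that normalization preserves solutions.
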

\begin{proof}
	Let $(\Gamma_1,\Gamma_2 \imp \Delta_1,\Delta_2)\sigma\delta$ be a ground instance of $((\Gamma_1,\Gamma_2 \imp \Delta_1,\Delta_2)\sigma; (\pi_1 \wedge \pi_2)\sigma)$. 
	Then, $\delta$ is a solution of $(\pi_1 \wedge \pi_2)\sigma$ and $\sigma\delta$ is a solution of $\pi_1$ and $\pi_2$. 
	Hence, $(\Gamma_1 \imp \Delta_1, A)\sigma\delta$ and $(\Gamma_2 , B \imp \Delta_2)\sigma\delta$ are ground instances of $(\Gamma_1 \imp \Delta_1, A;\pi_1)$ and $(\Gamma_2 , B \imp \Delta_2;\pi_2)$, respectively.
	Because $A\sigma\delta= B \sigma\delta $, if $(\Gamma_1 \imp \Delta_1, A)\sigma\delta$ and $(\Gamma_2 , B \imp \Delta_2)\sigma\delta$ are satisfied, then $(\Gamma_1,\Gamma_2 \imp \Delta_1,\Delta_2)\sigma\delta$ is also satisfied.
	Therefore, SDC-Resolution is sound. 
	Let $(\Gamma\imp \Delta, A)\sigma\delta$ be a ground instance of $((\Gamma\imp \Delta, A)\sigma; \pi\sigma)$.  
	Then, $\delta$ is a solution of $\pi\sigma$ and $\sigma\delta$ is a solution of $\pi$. 
	Hence, $(\Gamma\imp \Delta, A, B )\sigma\delta$ is a ground instance of $(\Gamma\imp \Delta, A, B ;\pi)$.
	Because $A\sigma\delta= B \sigma\delta $, if $(\Gamma\imp \Delta, A, B )\sigma\delta$ is satisfied, then $(\Gamma\imp \Delta, A)\sigma\delta$ is also satisfied.
	Therefore, SDC-Factoring is sound.\qed
\end{proof}

\begin{definition}[Saturation]\label{def:decide:saturated}
A constrained clause set $N$ is called saturated up to redundancy, if for every inference between clauses in $N$
the result $(R;\pi)$ is either redundant in $N$ or $\ground{(R;\pi)} \subseteq \ground{N}$.
\end{definition}

Note that our redundancy notion includes condensation and  the condition $\ground{(R;\pi)} \subseteq \ground{N}$ allows ignoring variants
of clauses.

\begin{lemma}\label{lem:decide:condensation}
	Let constrained clause $(C';\pi')$ be a condensation of constrained clause $(C;\pi)$.
	Then, (i)$(C;\pi) \sat (C';\pi')$ and (ii)$(C;\pi)$ is redundant in $\{(C';\pi')\}$.
\end{lemma}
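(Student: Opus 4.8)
The plan is to unfold the definition of condensation and prove the two claims by comparing ground instances directly.

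Recall that $(C';\pi')$ being a condensation of $(C;\pi)$ means $C'\subset C$, and there is a substitution $\sigma$ with $\pi\sigma=\pi'$, $\pi'\subseteq\pi$, and for every $L\in C$ there is an $L'\in C'$ with $L\sigma=L'$. The last condition says that $\sigma$ maps $C$ onto $C'$ (as a multiset surjection on literals), so in particular $C\sigma\subseteq C'$; combined with $C'\subset C$ this will let me relate instances of the two clauses.

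\medskip\noindent\textbf{Part (i): $(C;\pi)\sat(C';\pi')$.}
First I would take an arbitrary Herbrand interpretation $\I{}$ with $\I{}\sat(C;\pi)$, i.e.\ $\I{}\sat\ground{(C;\pi)}$, and show $\I{}\sat\ground{(C';\pi')}$. So I pick an arbitrary $D'\in\ground{(C';\pi')}$; then $D'=C'\delta$ for some solution $\delta$ of $\pi'$. The idea is to exhibit this same ground clause (or a subclause of it) as a ground instance of $(C;\pi)$ via the composed substitution $\sigma\delta$. Because $\pi'=\pi\sigma$ and $\delta$ solves $\pi'=\pi\sigma$, the substitution $\sigma\delta$ is a solution of $\pi$, so $C(\sigma\delta)=C\sigma\delta\in\ground{(C;\pi)}$ and is therefore satisfied by $\I{}$. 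Now since every $L\in C$ has an $L'\in C'$ with $L\sigma=L'$, we get $C\sigma\subseteq C'$, hence $C\sigma\delta\subseteq C'\delta=D'$; as a clause is satisfied whenever one of its literals (positive succedent literal true, or antecedent literal false) is, and $C\sigma\delta$ is a sub-multiset of $D'$, satisfaction of $C\sigma\delta$ by $\I{}$ forces satisfaction of $D'$. This gives $\I{}\sat D'$ for every such $D'$, establishing (i).

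\medskip\noindent\textbf{Part (ii): $(C;\pi)$ is redundant in $\{(C';\pi')\}$.}
By the redundancy definition I must show that for every $D\in\ground{(C;\pi)}$ there exist $D_1,\dots,D_n\in\ground{\{(C';\pi')\}}^{\prec D}$ with $D_1,\dots,D_n\sat D$. Given $D=C\delta$ with $\delta$ a solution of $\pi$, the natural candidate is the single clause $D_1=C'\delta$: I would argue $\delta$ is a solution of $\pi'$ since $\pi'\subseteq\pi$ (so any solution of $\pi$ solves the smaller conjunction $\pi'$), whence $C'\delta\in\ground{(C';\pi')}$. Since $C'\subset C$, the ground clause $C'\delta$ is a proper sub-multiset of $C\delta=D$, so $C'\delta\sat D$ (a satisfied literal of $C'\delta$ is also in $D$), and moreover $C'\delta\prec D=C\delta$ in the clause ordering because it has strictly fewer literals and its literal multiset is dominated. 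The main subtlety I expect is precisely this ordering check $C'\delta\prec C\delta$: I must confirm that dropping literals strictly decreases the clause in $\prec$, which follows because the clause ordering is the multiset extension of the literal ordering and removing occurrences from a multiset yields a strictly smaller multiset in the multiset extension of any well-founded ordering. With $C'\delta\in\ground{\{(C';\pi')\}}^{\prec D}$ and $C'\delta\sat D$ in hand, redundancy follows.

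\medskip The two genuine obstacles are bookkeeping rather than depth: ensuring that $\sigma\delta$ (resp.\ $\delta$) really is a solution of the relevant constraint via the identities $\pi\sigma=\pi'$ and $\pi'\subseteq\pi$, and justifying the strict ordering decrease in (ii) from $C'\subset C$ through the multiset extension. Neither requires the structure of MSL(SDC) clauses specifically; the argument is purely about the constrained-clause semantics and the clause ordering defined in Section~\ref{sec:prelim}.
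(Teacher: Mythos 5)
Your proof is correct and follows essentially the same route as the paper's: part (i) uses $\sigma\delta$ as a solution of $\pi$ (via $\pi\sigma=\pi'$) together with the literal mapping $L\mapsto L\sigma\in C'$, and part (ii) uses that $\delta$ solves $\pi'$ (via $\pi'\subseteq\pi$) so that $C'\delta$ is a smaller ground instance entailing $C\delta$. The only cosmetic caveat is that $C\sigma\subseteq C'$ should be read as set inclusion of literals (duplicates may collapse under $\sigma$), which is all your satisfaction argument actually needs.
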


\begin{proof}
	Let $\sigma$ be a substitution such that $C' \subset C$, $\pi\sigma= \pi'$,
	$\pi' \subseteq \pi$, and for all $L \in C$ there is a $L'\in C'$ with $L\sigma=L'$.
	
	(i) Let $C'\delta \in \ground{(C';\pi')}$.
	Then $\sigma\delta$ is a solution of $\pi$ and hence $C\sigma\delta \in \ground{(C;\pi)}$.
	Let $\I{} \sat C\sigma\delta$. Hence, there is a $L\sigma\delta \in \I{}$ for some $L \in C$ and thus $L'\delta \in \I{}$ for some $L' \in C'$ with $L\sigma=L'$.
	Therefore, $\I{} \sat C'\delta$. Since $\I{}$ and $C'\delta$ were arbitrary, $(C;\pi) \sat (C';\pi')$.
	
	(ii) Let $C\delta \in \ground{(C;\pi)}$. Because $\pi' \subseteq \pi$, $\delta$ is a solution of $\pi'$ and hence, $C'\delta \in \ground{(C';\pi')} $.
	Therefore, since $C'\delta \subset C\delta$, $C'\delta \in \ground{\{(C';\pi')\}}^{\prec C\delta}$ and  $C'\delta\sat C\delta$.\qed
\end{proof}

\begin{definition}[Partial Minimal Model Construction]\label{def:decide:model}
Given a constrained clause set $N$, an ordering $\prec$ and the selection function $\mathrm{sel}$, we construct an interpretation $\I{N}$ for $N$, called a partial model,  inductively as follows:
\begin{align*}
 \I{C} &:= \bigcup_{D\prec C}^{D \in \ground{N}} \delta_D, \text{ where }C \in \ground{N}\\
 \delta_D &:= \left\{
  \begin{array}{l l}
    \{A\} & \quad \text{if $D=\Gamma \imp \Delta, A$  }\\
	  & \quad \text{$A$ strictly maximal, $\sel(D)=\emptyset$  and $\I{D}\unsat D$}\\
    \emptyset & \quad \text{otherwise}
  \end{array} \right. \\
\I{N} &:= \bigcup_{C \in \ground{N}} \delta_C
\end{align*}
Clauses $D$ with $\delta_D \neq \emptyset$ are called productive.
\end{definition}

\begin{lemma}[Ordered SDC Resolution Completeness]\label{lem:decide:complete}
 Let $N$ be a constrained clause set saturated up to redundancy by ordered SDC-resolution with selection. 
Then $N$ is unsatisfiable, if and only if $\square\in \ground{N}$. If $\square\not\in \ground{N}$ then $\I{N}\models N$.
\end{lemma}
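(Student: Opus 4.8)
The plan is to prove the two directions separately, with essentially all of the work in the second. The direction ``$\square\in\ground{N}\Rightarrow N$ unsatisfiable'' is immediate, since the empty clause is false in every interpretation and $\ground{N}$ is entailed by $N$. For the converse I would argue contrapositively and at the same time establish the final sentence: assuming $\square\notin\ground{N}$, I show that the partial model $\I{N}$ of Definition~\ref{def:decide:model} is in fact a model of $N$, which witnesses satisfiability. Since $\I{N}\sat N$ is equivalent to $\I{N}\sat\ground{N}$, it suffices to reason entirely at the ground level about the clauses in $\ground{N}$. Before the main argument I would record the standard monotonicity facts about the construction: $\I{C}\subseteq\I{D}$ whenever $C\prec D$, atoms are never retracted, and every productive $D$ with produced atom $A$ satisfies $A\in\I{N}$ and is hence true in $\I{N}$.

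The core is the \emph{minimal counterexample} technique of Bachmair and Ganzinger, lifted to constrained clauses. Suppose $\I{N}\unsat\ground{N}$ and let $C\delta\in\ground{N}$ be $\prec$-minimal among the ground clauses false in $\I{N}$; write $C\delta=\Gamma\delta\imp\Delta\delta$, so $\Gamma\delta\subseteq\I{N}$ and $\Delta\delta\cap\I{N}=\emptyset$. First, $C\delta$ cannot be redundant: redundancy would supply clauses of $\ground{N}^{\prec C\delta}$ entailing $C\delta$, and these are true by minimality, so $C\delta$ would be true. I would then split on the selected and maximal literals of $C$. If $\sel(C)\neq\emptyset$ or the maximal literal of $C\delta$ is negative, pick such a negative literal $\neg B\delta$; since $C\delta$ is false, $B\delta\in\I{N}$, so $B\delta$ was produced by some productive $D\prec C\delta$ with strictly maximal produced atom $A=B\delta$. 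The parent constrained clauses then admit an SDC-Resolution inference (Definition~\ref{def:decide:resolution}) on these literals, whose ground instance, obtained from $\delta$ together with the producer's solution, is false in $\I{N}$ and strictly below $C\delta$. By saturation the constrained resolvent is redundant in $N$ or its ground instances lie in $\ground{N}$, so this smaller ground clause is either true or a strictly smaller counterexample---both contradictions.

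If instead no literal is selected and the maximal literal of $C\delta$ is a positive atom $A\delta$, then either $A\delta$ is strictly maximal, in which case the defining conditions of $\delta_{C\delta}$ force $C\delta$ to be productive and hence true (a contradiction), or $A\delta$ is maximal but duplicated, in which case SDC-Factoring (Definition~\ref{def:decide:factoring}) applies and yields a smaller false ground clause, again contradicting minimality via saturation. For the productivity step I would verify $\I{C\delta}\unsat C\delta$ using that $\I{C\delta}\subseteq\I{N}$ and that every antecedent atom is $\prec A\delta$ and hence already decided in $\I{C\delta}$; this is exactly where the ordering assumption on $\neg Q(s)$ versus $P(t[s]_p)$ and the design of $\sel$ (Definition~\ref{def:decide:sel}) are used, to guarantee that the maximal literal is positive only when no antecedent literal is selected.

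The main technical obstacle is the \emph{lifting} of these ground inferences to genuine SDC-inferences, i.e., showing that whenever a ground resolution or factoring step between instances of clauses in $N$ is possible, the constrained rule is applicable to the parent constrained clauses with a compatible constraint. Here I would use that the ground substitution $\delta$, combined with the producer's solution, is a solution of $\pi_1$ and of $\pi_2$, so that $\sigma\delta$ witnesses solvability of $\norm{(\pi_1\wedge\pi_2)\sigma}$---invoking that normalization preserves solutions---giving condition~2 of Definition~\ref{def:decide:resolution}; and that the existential formulation of \emph{[strictly] maximal} for constrained clauses is met precisely because $\delta$ is a ground solution realizing the required ordering on the instance, matching conditions~3 and~5. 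The selection conditions transfer verbatim, since $\sel$ depends only on the clause part. Finally I would close the loop by noting that Lemma~\ref{lem:decide:condensation} renders the condensation component of redundancy harmless, so that the alternative ``redundant in $N$, or $\ground{(R;\pi)}\subseteq\ground{N}$'' always collapses, at the ground level, to the existence of smaller true clauses entailing the offending instance, delivering the contradiction that completes the induction and hence the proof that $\I{N}\sat N$.
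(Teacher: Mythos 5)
Your proposal is correct and follows essentially the same route as the paper's proof: the Bachmair--Ganzinger minimal-counterexample argument over the partial model of Definition~\ref{def:decide:model}, with the same case split (SDC-Resolution with the producing clause on a selected or maximal negative literal, SDC-Factoring on a duplicated maximal positive literal), the same lifting of the ground inference via the observation that the ground solutions witness solvability of the normalized conjoined constraint, and the same use of saturation up to redundancy to obtain either a smaller false clause in $\ground{N}$ or an entailment contradiction. One small inaccuracy in your commentary: the ordering assumption on $\neg Q(s)$ versus $P(t[s]_p)$ is not actually needed for this completeness lemma, whose argument is generic in $\prec$ and $\sel$ (the fact that antecedent atoms lie below a maximal positive literal already follows from the multiset encoding of literals); that assumption is only exploited later, in the termination argument of Lemma~\ref{lem:decide:saturation}.
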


\begin{proof}
	Assume $N$ is unsatisfiable but $\square\not\in \ground{N}$.
	For the partial model $\I{N}$, there exists a minimal false clause $C\sigma\in \ground{(C;\pi)}$ for some $(C;\pi)\in N$.
	
	$C\sigma$ is not productive, because otherwise $\I{N} \sat C\sigma$.
	Hence, either $\sel(C)\neq \emptyset$  or no positive literal in $C\sigma$ is strictly maximal.
	Assume  $C= \Gamma_2, B \imp \Delta_2$ with $B \in \sel(C)$ or $\neg B\sigma$ maximal. 
	Then, $B\sigma \in \I{C\sigma}$ and there exists a ground instance $(\Gamma_1 \imp \Delta_1 , A)\tau= D\tau \prec C\sigma$ of some clause $(D;\pi')\in N$,
	which produces $A\tau=B\sigma$. 
	Therefore, there exists a $\rho=\mgu(A,B)$ and ground substitution $\delta$ 
	such that $C\sigma=C\rho\delta$, $D\tau=D\rho\delta$. 
	Since $\rho\delta=\sigma$ is a solution of $\pi$ and $\pi'$, $\delta$ is a solution of $(\pi \wedge \pi')\rho$.
	Under these conditions, SDC-Resolution can be applied to $(\Gamma_1 \imp \Delta_1 , A;\pi')$ and  $(\Gamma_2, B \imp \Delta_2;\pi)$.
	Their resolvent $(R;\pi_R)=((\Gamma_1, \Gamma_2 \imp \Delta_1,  \Delta_2 )\rho;(\pi \wedge \pi')\rho)$ is either redundant in $N$ or $\ground{(R;\pi_R)} \subseteq \ground{N}$.
	Its ground instance $R\delta$ is false in $\I{N}$ and $R\delta \prec C\sigma$.
	If $(R;\pi_R)$ is redundant in $N$, there exist $C_1,\ldots,C_n$ in $\ground{N}^{\prec R\delta}$ with $C_1,\ldots,C_n \sat R\delta$.
	Because $C_i \prec R\delta \prec C\sigma$, $\I{N} \sat C_i$ and hence $\I{N} \sat R\delta$, which contradicts $\I{N} \unsat R\delta$.
	Otherwise, if  $\ground{(R;\pi_R)} \subseteq \ground{N}$, then $R\delta\in \ground{N}$, which contradicts $C\sigma$ being minimal false. 
	
	Now, assume $\sel(C)= \emptyset$ and $C= \Gamma \imp \Delta, B$ with  $B\sigma$  maximal. 
	Then, $C= \Gamma \imp \Delta', A, B$ with $A\sigma=B\sigma$. 
	Therefore, there exists a $\rho=\mgu(A,B)$ and ground substitution $\delta$ 
	such that $C\sigma=C\rho\delta$ and $\rho\delta$ is a solution of $\pi$.
	Hence, $\delta$ is a solution of $\pi\rho$.
	Under these conditions, SDC-Factoring can be applied to $(\Gamma \imp \Delta', A, B;\pi)$.
	The result $(R;\pi_R)=((\Gamma \imp \Delta', A)\rho;\pi\rho)$ 
	is either redundant in $N$ or $\ground{(R;\pi_R)} \subseteq \ground{N}$.
	Its ground instance $R\delta$ is false in $\I{N}$ and $R\delta \prec C\sigma$.
	If $(R;\pi_R)$ is redundant in $N$, there exist $C_1,\ldots,C_n$ in $\ground{N}^{\prec R\delta}$ with $C_1,\ldots,C_n \sat R\delta$.
	Because $C_i \prec R\delta \prec C\sigma$, $\I{N} \sat C_i$ and hence $\I{N} \sat R\delta$, which contradicts $\I{N} \unsat R\delta$.
	Otherwise, if $\ground{(R;\pi_R)} \subseteq \ground{N}$, then $R\delta\in \ground{N}$, which contradicts $C\sigma$ being minimal false. 
	
	Therefore, if $\square\not\in \ground{N}$, no    minimal false clause exists and    $\I{N}\models N$.            \qed                                                                           
\end{proof}

\begin{lemma}\label{lem:decide:finite_base}
Let $N$ be a set of MSL(SDC) clauses without variants or uncondensed clauses over a finite signature $\Sigma$.
$N$ is finite if there exists an integer $d$ such that for every $(C;\pi) \in N$, depth($\pi$)$\leq d$ and\\
(1) $C=  S_1(x_1),\dots,S_n(x_n),S'_1(t),\dots,S'_m(t) \imp \Delta$ or \\
(2) $C= S_1(x_1),\dots,S_n(x_n),S'_1(t),\dots,S'_m(t) \imp S(t),\Delta$\\ 
with $t$ shallow and linear, and $\vars(t) \cap \vars(\Delta) = \emptyset$. 
\end{lemma}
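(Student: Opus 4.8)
The plan is to show that, up to variable renaming, there are only finitely many condensed clauses of the stated form over $\Sigma$ with $\depth(\pi)\le d$; since $N$ contains no variants, this bounds $|N|$. Fix the finite data of $\Sigma$: let $r$ be the number of predicate symbols and $\alpha$ the maximal arity. Because $t$ is shallow and linear and every succedent atom is shallow, every complex term occurring in a clause has the form $f(y_1,\dots,y_k)$ with distinct variables, so there are only finitely many term shapes and hence only finitely many atom shapes (a predicate together with a shape) up to renaming.

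The heart of the argument is to bound the number of distinct variables in each clause, exploiting that every clause in $N$ is condensed. First I would normalize the constraints: by the rewrite rules every surviving atomic constraint has the form $x\neq s$ with $x$ a single variable and $s$ straight of depth $\le d$, and for fixed $x$ the right-hand sides form a subsumption-reduced set. Since there are only finitely many straight terms of depth $\le d$ over $\Sigma$, each variable carries one of finitely many constraint patterns, and---crucially---since each normalized atomic constraint has a single left-hand variable, $\pi$ partitions according to that variable; as the succedent atoms are pairwise variable disjoint, the constraints on the variables of one succedent atom are disjoint from those on another. I would then assign to each succedent atom $A$ with variables $\bar y$ a \emph{role}: its atom shape together with, for each position, the set of antecedent predicates applied to that variable and its constraint pattern in $\pi$. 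There are only finitely many roles. If two succedent atoms had the same role, mapping the variables of one onto the other via a renaming $\sigma$ would be a proper condensation: the images of its antecedent literals already occur (equal predicate sets), and the image under $\sigma$ of its constraints coincides with the other atom's constraints, already present, so $\pi\sigma=\pi'$ for the reduced $\pi'\subseteq\pi$. Hence distinct succedent atoms have distinct roles, bounding $|\Delta|$ and thus the succedent variables. An analogous role argument (predicate set plus constraint pattern per variable) bounds the number of antecedent variables occurring in no succedent atom and not in $t$; the variables of $t$ number at most $\alpha$. Thus the number of variables per clause is bounded by some $\nu$ depending only on $\Sigma$ and $d$.

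With at most $\nu$ variables and a finite signature there are only finitely many possible antecedent and succedent literals, hence finitely many literal sets (condensation also removes duplicate literals, so clauses are sets). For the constraints, at most $\nu$ left-hand variables each carry one of finitely many subsumption-reduced sets of straight terms of depth $\le d$, so there are finitely many normalized constraints. Combining these finite choices yields finitely many clauses up to renaming, and since $N$ has no variants, $N$ is finite.

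I expect the main obstacle to be the role/condensation step bounding $|\Delta|$ and the antecedent variables: one must argue that merging two same-role succedent atoms along the renaming $\sigma$ is a genuine condensation, which requires checking simultaneously that the images of the merged antecedent literals already appear and that $\pi\sigma=\pi'$ for some $\pi'\subseteq\pi$. The latter is exactly where the decomposition of $\pi$ into single-variable atomic constraints $x\neq s$ (guaranteed by normalization) and the variable-disjointness of the succedent atoms are needed, so getting the interaction between condensation and the dismatching constraints right is the delicate part.
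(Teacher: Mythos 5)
Your proposal is correct and follows essentially the same route as the paper's proof: your ``roles'' are exactly the paper's variable-disjoint clause fragments (one per succedent atom, one for the $t$-part, one per leftover antecedent variable), and your key step---two succedent atoms with the same role yield a proper condensation via the renaming $\sigma$, using normalization to single-variable constraints $x \neq s$ and the variable-disjointness of succedent atoms---is precisely the paper's argument that a condensed clause cannot contain two variant components, combined with the same finiteness counts (finitely many shallow linear atoms over $\Sigma$ and finitely many instance-reduced constraints of depth $\leq d$ per variable).
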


\begin{proof}
	Let $(C;\pi) \in N$. 
	$(C;\pi)$ can be separated into variable disjoint components $(\Gamma_1,\ldots,\Gamma_n \imp \Delta_1,\ldots,\Delta_n; \pi_1 \wedge \ldots \wedge \pi_n)$, 
	where $\vert \Delta_i \vert \leq 1$ and $\lvar(\pi_i) \subseteq \vars(\Gamma_i \imp \Delta_i)$.
	For each positive literal $P(s) \in \Delta$ there is a fragment 
	\begin{align*}
		(A)~~ & (S_1(x_1),\dots,S_k(x_k) \imp P(s);\pi') \\
		\intertext{with $\{x_1,\dots,x_k\}\subseteq \vars(s)$. If $m > 0$, there is another fragment} 
		(B)~~ & (S_1(x_1),\dots,S_k(x_k),S'_1(t),\dots,S'_m(t) \imp ;\pi') \\
		\intertext{ or }
		(C)~~ & (S_1(x_1),\dots,S_k(x_k),S'_1(t),\dots,S'_m(t) \imp S(t);\pi') \\
		\intertext{ with $\{x_1,\dots,x_k\}\subseteq \vars(t)$, respectively.
			Lastly, for each variable $x\in \vars(C)$ with $x \notin \vars(t) \cup \vars(\Delta)$ there is a fragment  }
		(D)~~ & (S_1(x),\dots,S_k(x)\imp;\pi'). 
	\end{align*}
	
	Since there are only finitely many terms $s$ with depth($s$)$\leq d$ modulo renaming,
	there are only finitely many atomic constraints $x \neq s$ for a given variable $x$ different up to renaming $s$.
	Thus, a normal constraint can only contain finitely many combinations of subconstraints $\bwedge{i\in \Ind} ~ x \neq s_i$ without some $s_i$ being an instance of another $s_j$.  
	Therefore, for a fixed set of variables $x_1,\dots,x_k$, there are only finitely many constraints $\pi=\bwedge{i\in \Ind} ~~ z_i \neq s_i$ with $\lvar(\pi)\subseteq \{x_1,\dots,x_k\}$ up to variants.
	
	Since the number of predicates, function symbols, and their ranks is finite,
	the number of possible shallow and linear atoms $S(t)$ different up to variants is finite. 
	For a given shallow and linear $t$, there exist only finitely many clauses of the form $({S_{1}(t),\dots,S_{n}(t) \imp S(t)}; \pi)$ or $(S_{1}(t),\dots,S_{n}(t) \imp;\pi)$ with $\lvar(\pi)\subseteq \vars(t)$ modulo condensation and variants. 
	For a fixed set of variables $x_1,\dots,x_k$, there exist only finitely many clauses of the form $(S_1(y_1),\dots,S_k(y_l) \imp; \pi ) $ with $\{y_1,\dots,y_l\} \cup \lvar(\pi)\subseteq \{x_1,\dots,x_k\}$ modulo condensation and variants. 
	Therefore,  there are only finitely many distinct clauses of each form (A)-(D) without variants or condensations.
	
	If in the clause $(C;\pi)=(\Gamma_1,\ldots,\Gamma_n \imp \Delta_1,\ldots,\Delta_n; \pi_1 \wedge \ldots \wedge \pi_n)$ for some $i\neq j$, $(\Gamma_i \imp \Delta_i;\pi_i)$ is a variant of $(\Gamma_j \imp \Delta_j;\pi_j)$,
	then $(C;\pi)$ has a condensation and is therefore not part of $N$.
	Hence, there can be only finitely many different $(C;\pi)$  without variants or condensations and thus $N$ is finite. \qed
\end{proof}

\begin{lemma}[Finite Saturation]\label{lem:decide:saturation}
 Let $N$ be an MSL(SDC) clause set. 
Then $N$ can be finitely saturated up to redundancy by SDC-resolution with selection function $\mathrm{sel}$. 
\end{lemma}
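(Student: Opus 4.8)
The plan is to reduce termination of saturation to the finiteness criterion of Lemma~\ref{lem:decide:finite_base}. I would fix the finite signature $\Sigma$ consisting of the finitely many symbols occurring in $N$ --- no SDC-resolution or SDC-factoring step introduces a new function or predicate symbol --- and then argue that the set of all clauses \emph{kept} during saturation (after condensation, and after discarding variants and every conclusion $(R;\pi)$ with $\ground{(R;\pi)}\subseteq\ground{N}$, which our redundancy notion permits) falls within the scope of Lemma~\ref{lem:decide:finite_base}: it lives over the finite $\Sigma$, its constraint depth is uniformly bounded by a fixed $d$, and every retained clause is of form (1) or (2). Granting this, Lemma~\ref{lem:decide:finite_base} bounds the number of retained clauses, so only finitely many non-redundant inferences remain possible and any fair strategy terminates, yielding a finite set saturated up to redundancy; soundness and the model property of the result are already supplied by the soundness lemma above and Lemma~\ref{lem:decide:complete}.

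The work then splits into a closure-plus-depth part and a structural part, both by induction over inference steps. First I would exploit $\sel$ (Definition~\ref{def:decide:sel}): the only premise that may be resolved on a positive literal, or used in factoring, has $\sel=\emptyset$, hence an antecedent consisting solely of variable atoms $S_i(x_i)$ whose variables occur at \emph{non-top} positions of the succedent, and, being MSL, its maximal atom is $P(s)$ with $s$ shallow and linear. Therefore $\sigma=\mgu(P(s),P(t))=\mgu(s,t)$, and since $s$ is shallow and linear this unifier instantiates the antecedent variables of that premise only to proper subterms of the resolved literal's argument $t$, and the variables of the other premise at most to constants. From this I would read off (i) closure within MSL(SDC) --- the conclusion is again positive shallow, linear and monadic, and its constraint $\norm{(\pi_1\wedge\pi_2)\sigma}$ is solvable by the side conditions and normalizes by the rules of Section~\ref{sec:prelim} --- and (ii) the depth bound: no antecedent term of a conclusion is deeper than the deepest antecedent term of a premise, while $\sigma$ leaves the right-hand sides of constraints untouched and $\norm{\cdot}$ never increases $\depth$, so $\depth(\norm{(\pi_1\wedge\pi_2)\sigma})\le\max(\depth(\pi_1),\depth(\pi_2))$. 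Hence a single $d$, namely the maximum term and constraint depth occurring in the initial $N$, bounds every retained clause.

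The main obstacle is the structural invariant: showing that every retained conclusion is again of form (1) or (2), equivalently that after condensation each of its variable-disjoint components matches one of the shapes (A)--(D) enumerated in the proof of Lemma~\ref{lem:decide:finite_base}. Here the separation $\vars(t)\cap\vars(\Delta)=\emptyset$ and the ``at most one complex antecedent term per component'' property must be shown to survive $\mgu$ and constraint normalization. This is exactly where the $\sel$-forced placement of the active premise's antecedent variables strictly below the top of its succedent is decisive: it prevents resolution from merging a freshly created complex antecedent term with a succedent variable and from producing two incomparable complex terms within one component, while condensation (Lemma~\ref{lem:decide:condensation}) removes the duplicated components that unification may introduce. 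I would carry this out by a case analysis over the form of the clause being resolved into, (1) versus (2), crossed with the $\sel$-case (Definition~\ref{def:decide:sel}, case 1, 2, or 3) by which the resolved negative literal is selected, treating the single SDC-factoring configuration analogously, and finally invoke Lemma~\ref{lem:decide:finite_base} to conclude finiteness and hence finite saturability.
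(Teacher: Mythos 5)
Your reduction of everything to Lemma~\ref{lem:decide:finite_base} rests on a structural invariant that is false: it is not true that every clause kept during saturation has form (1) or (2) of that lemma. Already the input clauses need not have this form---the MSL(SDC) definition restricts only the \emph{succedent} to be shallow and linear, while the antecedent may contain several distinct, arbitrarily deep complex terms, e.g.\ $S(f(g(x))), T(h(y)) \imp P(z)$. Moreover, resolution reproduces such clauses: resolving $U(v) \imp S(f(v))$ (nothing selected) with $S(f(g(x))), T(h(y)) \imp P(z)$ (whose selected literal has a complex argument) yields $U(g(x)), T(h(y)) \imp P(z)$, which again has two distinct complex antecedent terms, is in general neither redundant nor a variant of an existing clause, and hence must be retained---yet it matches neither form (1) nor (2), since those allow, besides variable atoms $S_i(x_i)$, only repeated occurrences of a \emph{single} shallow linear term $t$ with $\vars(t)\cap\vars(\Delta)=\emptyset$. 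So Lemma~\ref{lem:decide:finite_base} simply does not bound the number of such clauses, and the case analysis you defer to the end of your proof would run into exactly this obstruction rather than resolve it.

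What is missing is a second, independent termination argument for precisely these inferences, namely those where the selected negative literal of the right premise has a complex argument. The paper handles them not with Lemma~\ref{lem:decide:finite_base} but with a well-founded measure: the multiset of term depths of the negative literals of the resolvent is strictly smaller than that of the right parent, because the selected term $f(t_1,\ldots,t_k)$ is replaced either by the strictly shallower terms $x_i\sigma = t_i$ or by plain variables; well-foundedness of the multiset order on the naturals then limits these inferences to finitely many applications. Lemma~\ref{lem:decide:finite_base} is invoked only for the remaining cases---resolution where the selected literal is a variable literal, and all factoring inferences---whose conclusions really do have forms (1)/(2). Any correct proof along your lines needs this bipartite structure (measure argument plus finite-base argument); with only the finite-base half it cannot go through. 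A further, more local slip: when the selected literal is $S(f(t_1,\ldots,t_k))$, the mgu sends the positive premise's succedent variables to the subterms $t_i$, which may be arbitrarily deep, not ``at most to constants''; and when it is a variable $y'$, that variable is mapped to the shallow term $f(y_1,\ldots,y_k)$, again not a constant. This does not endanger the depth bound you want, but the unification analysis as you state it is incorrect.
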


\begin{proof}
	The general idea is that given the way $\sel$ is defined the clauses involved in constrained resolution and factoring can only fall into certain patterns.
	Any result of such inferences then is either strictly smaller than one of its parents by some terminating measure
	or falls into a set of clauses that  is bounded by Lemma~\ref{lem:decide:finite_base}.
	Thus, there can be only finitely many inferences before $N$ is saturated.
	
	Let $d$ be an upper bound on the depth of constraints found in $N$ and $\Sigma$ be the finite signature consisting of the function and predicate symbols occurring in $N$.
	Let $(\Gamma_1 \imp \Delta_1, S(t);\pi_1)$ and $(\Gamma_2 , S(t') \imp \Delta_2;\pi_2)$ be clauses in $N$ where sdc-resolution applies with $\sigma=\mgu(S(t),S(t'))$ and resolvent $R=((\Gamma_1,\Gamma_2\imp \Delta_1,\Delta_2)\sigma;\norm{(\pi_1 \wedge \pi_2)\sigma})$.
	
	Because no literal is selected by $\sel$, $\Gamma_1 \imp \Delta_1, S(t)$ can match only one of two patterns:
	\begin{align*}
		(A)~~ &  S_1(x_1),\dots, S_n(x_n) \imp S(f(y_1,\dots,y_k)),\Delta 
		\intertext{where $t=f(y_1,\dots,y_k)$ and $\{x_1,\dots,x_n\}\subseteq\{y_1,\dots,y_k\}\cup \vars(\Delta)$.} 
		(B) ~~ & S_1(x_1),\dots, S_n(x_n) \imp S(y),\Delta 
		\intertext{where $t=y$ and $x_1,\dots,x_n$ are variables in $\vars(\Delta)$, i.e., $y$ occurs only once.}
	\end{align*}
	The literal $S(t')$ is selected by $\sel$ in $\Gamma_2 , S(t') \imp \Delta_2$, and therefore $\Gamma_2 , S(t') \imp \Delta_2$ can match only one of the following three patterns:
	\begin{align*}
		(1)~~ &S(f(t_1,\dots,t_k)),\Gamma' \imp \Delta' \\
		(2)~~ & S(y'),\Gamma' \imp \Delta'  \text{ where $\Gamma'$ has no function terms and $y\notin \vars(\Delta')$. }\\
		(3)~~ & S(y'),\Gamma' \imp S'(y'),\Delta' \text{ where $\Gamma'$ has no function terms. }
	\end{align*}
	This means that the clausal part $(\Gamma_1,\Gamma_2\imp \Delta_1,\Delta_2)\sigma$ of $R$ has one of  six forms:
	\begin{align*}
		(A1)~~&  S_1(x_1)\sigma,\dots, S_n(x_n)\sigma, \Gamma' \imp \Delta,\Delta' \text{ with $\sigma=\{{y_1 \mapsto t_1},\dots \}$.}\\
		\intertext{ $\Delta\sigma = \Delta$ because  $S(f(y_1,\dots,y_k))$ and $\Delta$ do not share variables.} 
		(B1)~~&  S_1(x_1),\dots, S_n(x_n), \Gamma' \imp \Delta,\Delta'.\\
		\intertext {The substitution $\{y \mapsto f(t_1,\dots,t_k) \}$ is irrelevant since $S(y)$ is the only literal with variable $y$.}
		(A2)~~&  S_1(x_1),\dots, S_n(x_n), \Gamma'\tau \imp \Delta,\Delta' \text{ with  $\tau=\{{y' \mapsto f(y_1,\dots,y_k)} \}$.}\\
		\intertext{ $\Delta'\tau=\Delta'$ because $y' \notin \vars(\Delta')$.}
		(B2)~~&  S_1(x_1),\dots, S_n(x_n), \Gamma' \imp \Delta,\Delta'. \\
		(A3)~~&  S_1(x_1),\dots, S_n(x_n), \Gamma'\tau \imp S'(f(y_1,\dots,y_k)),\Delta,\Delta' \text{ with  $\tau=\{y \mapsto f(y_1,\dots,y_k) \}$.}\\
		\intertext{ $\Delta'\tau=\Delta'$ because $y' \notin \vars(\Delta')$.}
		(B3)~~&  S_1(x_1),\dots, S_n(x_n), \Gamma' \imp S'(y'),\Delta,\Delta'.
	\end{align*}
	
	In the constraint $\norm{(\pi_1 \wedge \pi_2) \sigma}$  the maximal depth of the subconstraints is less or equal to the maximal depth of $\pi_1$ or $\pi_2$.
	Hence, $d$ is also an upper bound on the constraint of the resolvent.
	In each case, the resolvent is again an MSL(SDC) clause.
	
	In the first and second case, the multiset of term depths of the negative literals in $R$ is strictly smaller than for the right parent.
	In both, the $\Gamma$ is the same between the right parent and the resolvent. 
	Only the $f(t_1,\dots,t_k)$ term is replaced by $x_1\sigma,\dots, x_n\sigma$ and $x_1,\dots, x_n$ respectively.
	In the first case, the depth of the $x_i\sigma$ is either zero if $x_i\notin \{y_1,\dots,y_k\}$ or at least one less than $f(t_1,\dots,t_k)$ since $x_i\sigma=t_i$.
	In the second case, the $x_i$ have depth zero which is strictly smaller than the depth of $f(t_1,\dots,t_k)$.
	Since the multiset ordering on natural numbers is terminating, the first and second case  can only be applied finitely many times by constrained resolution.
	
	In the third to sixth case $R$ is of the form
	$(S_1(x_1),\dots,S_l(x_l),S'_1(t),\dots,S'_m(t) \imp \Delta;\pi)$ or 
	$(S_1(x_1),\dots,S_l(x_l),S'_1(t),\dots,S'_m(t) \imp S(t)),\Delta;\pi)$ with $t=f(y_1,\dots,y_k)$.
	By Lemma \ref{lem:decide:finite_base}, there are only finitely many such clauses after condensation and removal of variants.
	Therefore, these four cases can apply only finitely many times during saturation.
	
	Let $(\Gamma \imp \Delta, S(t), S(t');\pi)$ be a clause in $N$ where sdc-factoring applies with $\sigma=\mgu(S(t),S(t'))$ and $R=((\Gamma \imp \Delta, S(t))\sigma;\norm{\pi\sigma})$.
	Because in $\Gamma \imp \Delta, S(t), S(t')$  no literal is selected, $\Gamma \imp \Delta, S(t), S(t')$ and $(\Gamma \imp \Delta, S(t))\sigma$ can only match  one of three patterns.
	\begin{align*}
		(A)~~& S_1(x_1),\dots, S_n(x_n) \imp S(f(y_1,\dots,y_k)),S(f(z_1,\dots,z_l)),\Delta \\
		\intertext {where $t=f(y_1,\dots,y_k)$, $t'=f(z_1,\dots,z_k)$, and $\{x_1,\dots,x_n\}\subseteq\{y_1,\dots,y_k\}\cup\{z_1,\dots,z_l\}\cup \vars(\Delta)$. The result is} 
		& S_1(x_1)\sigma,\dots, S_n(x_n)\sigma \imp S(f(y_1,\dots,y_k)),\Delta  \text{ with $\sigma=\{{z_1 \mapsto y_1},\dots \}$.}\\
		(B)~~&  S_1(x_1),\dots, S_n(x_n) \imp S(f(y_1,\dots,y_k)),S(z),\Delta \\
		\intertext{ where $t=f(y_1,\dots,y_k)$, $t'=z$ and  $\{x_1,\dots,x_n\}\subseteq\{y_1,\dots,y_k\}\cup\vars(\Delta)$, i.e., $z$ occurs only once. The result is}
		& S_1(x_1),\dots, S_n(x_n) \imp S(f(y_1,\dots,y_k)),\Delta.\\
		(C)~~& S_1(x_1),\dots, S_n(x_n) \imp S(y),S(z),\Delta \\
		\intertext{ where $t=y$, $t'=z$ and  $\{x_1,\dots,x_n\}\subseteq\vars(\Delta)$, i.e., $y$ and $z$ occur only once. The result is } 
		& S_1(x_1),\dots, S_n(x_n) \imp S(y),\Delta.
	\end{align*}
	
	In the new constraint $\norm{\pi \sigma}$ the maximal depth of the subconstraints is less or equal to the maximal depth of $\pi$.
	Hence $d$ is also an upper bound on the constraint of the resolvent.
	In each case, the resolvent is again  an MSL(SDC) clause.
	
	Furthermore, in each case the clause is of the form $(S_1(x_1),\dots,S_l(x_l) \imp \Delta; \pi)$. 
	By Lemma \ref{lem:decide:finite_base}, there are only finitely many such clauses  after condensation and removal of variants.
	Therefore, these three cases can  apply only finitely many times during saturation.\qed
\end{proof}

\begin{theorem}[MSL(SDC) Decidability]\label{theo:decide:main}
Satisfiability of the MSL(SDC) first-order fragment is decidable. 
\end{theorem}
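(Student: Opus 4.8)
The plan is to assemble the decision procedure directly from the three ingredients already established: soundness of the inference rules, termination of saturation (Lemma~\ref{lem:decide:saturation}), and refutational completeness of the saturated set (Lemma~\ref{lem:decide:complete}). Given an MSL(SDC) clause set $N$, the procedure I would run is a fair saturation of $N$ under SDC-Resolution and SDC-Factoring with the selection function $\sel$, where after each inference the resulting clause $(R;\pi_R)$ is immediately condensed and is discarded whenever it is a variant of, or redundant by condensation with respect to, a clause already present; this discarding is sound by Lemma~\ref{lem:decide:condensation} and is exactly the bookkeeping anticipated by Definition~\ref{def:decide:saturated}.

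First I would argue that the procedure halts. Every clause generated is again an MSL(SDC) clause whose constraint depth does not exceed the initial bound $d$, as shown inside the proof of Lemma~\ref{lem:decide:saturation}; moreover that lemma establishes that only finitely many inferences are applicable before $N$ becomes saturated up to redundancy. Concretely, the \emph{shrinking} resolution cases strictly decrease the multiset of negative-literal term depths, while all remaining cases produce clauses falling under the finite bound of Lemma~\ref{lem:decide:finite_base}. Hence a fair strategy that never reproduces variants or condensations terminates with a finite set $N^{*}$ that is saturated up to redundancy.

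Next I would check effectiveness of each step, since decidability requires the procedure to be an algorithm, not merely a terminating abstract process. Computing $\sigma = \mgu(A,B)$ is standard; solvability of $\norm{(\pi_1\wedge\pi_2)\sigma}$ and of $\norm{\pi\sigma}$ is decidable in linear-logarithmic time as cited in Section~\ref{sec:prelim}; the maximality and strict-maximality side conditions reduce to exhibiting a single constraint solution realizing the required ordering relation, which is decidable; and testing whether $(R;\pi_R)$ is a variant or a condensation of an existing clause, or satisfies $\ground{(R;\pi_R)}\subseteq\ground{N}$, is likewise decidable. Thus every inference and every discarding test can be carried out effectively.

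Finally I would read off the answer from $N^{*}$. If $\square\in\ground{N^{*}}$, equivalently the empty clause is derived, then by the Soundness Lemma $N$ has no model and the procedure reports unsatisfiable. If $\square\notin\ground{N^{*}}$, then Lemma~\ref{lem:decide:complete} yields $\I{N^{*}}\sat N^{*}$, so $N^{*}$ is satisfiable; since every clause added during saturation is a logical consequence of $N$ (again by the Soundness Lemma), $N$ and $N^{*}$ are equisatisfiable, and the procedure reports satisfiable. Essentially all of the difficulty has already been discharged in Lemma~\ref{lem:decide:saturation}; the point left to verify here, and the step I would treat most carefully, is that the side conditions and the redundancy, variant, and condensation bookkeeping are all decidable, so that termination of the abstract saturation genuinely yields a halting decision algorithm.
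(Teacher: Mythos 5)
Your proposal is correct and follows essentially the same route as the paper, whose proof of Theorem~\ref{theo:decide:main} is simply the combination of Lemma~\ref{lem:decide:saturation} (finite saturation) with Lemma~\ref{lem:decide:complete} (saturated sets are unsatisfiable iff they contain the empty clause), together with soundness of the inferences. Your additional care about effectiveness of the individual steps (mgu computation, constraint solvability, maximality, variant/condensation tests) just makes explicit what the paper leaves implicit; it is not a different argument.
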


\begin{proof}
	Follows from Lemma \ref{lem:decide:saturation} and \ref{lem:decide:complete}.
\end{proof}

\section{Approximation and Refinement}\label{sec:approx}

In the following, we show how decidability of the MSL(SDC) fragment can be used
to improve the approximation refinement calculus presented in \cite{Teucke2015}.

Our approach is based on a counter-example guided abstraction refinement (CEGAR) idea.
The procedure loops trough four steps: approximation, testing (un)satisfiability, lifting, and refinement.
The approximation step transforms any first-order logic clause set
into the decidable MSL(SDC) fragment while preserving unsatisfiability.
The second step employs the decidability result for  MSL(SDC), Section~\ref{sec:decide}, 
to test satisfiability of the approximated clause set.
If the approximation is satisfiable, the original problem is satisfiable as well and we are done.
Otherwise, the third step, lifting, tests whether the proof of unsatisfiability found for the approximated
clause set can be lifted to a proof of the original clause set.
If so, the original clause set is unsatisfiable and we are again done.
If not, we extract a cause for the lifting failure that always amounts to two different
 instantiations of the same variable in a clause from the original clause set. 
This is resolved by the fourth step, the refinement. The crucial clause in the
original problem is replaced and instantiated in a satisfiability preserving way 
such that the different instantiations do not reoccur anymore in subsequent iterations of the loop. 

As mentioned before, our motivation to use dismatching constraints is that for an unconstrained clause
the refinement adds quadratically many new clauses to the clause set.
In contrast, with constrained clauses the same can be accomplished with adding just a single new clause.
This extension is rather simple as constraints are treated the same as the antecedent literals in the clause. 
Furthermore we present refinement as a separate transformation rule. 

The second change compared to the previous version is the removal of the Horn approximation rule,
where we have now shown in Section~\ref{sec:decide} that a restriction to Horn clauses is not required for decidability anymore.
Instead, the linear and shallow approximations are extended to apply to non-Horn clauses instead.

The approximation consists of individual transformation rules $N \Rightarrow N'$ that are non-deterministically applied. 
They transform a clause that is not in the MSL(SDC) fragment
in finite steps into MSL(SDC) clauses. Each specific property of MSL(SDC) clauses, i.e, monadic predicates,
shallow and linear positive literals, is generated by a corresponding rule: the Monadic transformation encodes non-Monadic predicates as functions,
the shallow transformation extracts non-shallow subterms by introducing fresh predicates and the linear transformation renames 
non-linear variable occurrences. 

Starting from a constrained clause set $N$ the transformation is parameterized by a single monadic projection predicate $T$, 
fresh to $N$ and for each non-monadic predicate $P$ a separate projection function $f_P$ fresh to $N$. 
The clauses in $N$ are called the original clauses while the clauses in $N'$ are  the approximated clauses. 
We assume all clauses in $N$ to be variable disjoint.

\begin{definition}\label{def:approx:termencod}
Given a predicate $P$, projection predicate $T$, and projection function $f_P$,
define the injective function $\Proj{P}^T(P(\args{t})) := T(f_p(\args{t}))$
and $\Proj{P}^T(Q(\args{s})) := Q(\args{s})$ for $P \neq Q$.
The function is extended to [constrained] clauses, clause sets and interpretations.
Given a signature $\Sigma$ with non-monadic predicates $P_1,\ldots,P_n$, define $\Proj{\Sigma}^T(N):=\Proj{P_1}^T(\ldots(\Proj{P_n}^T(N))\ldots)$ and $\Proj{\Sigma}^T(\I{}):=\Proj{P_1}^T(\ldots(\Proj{P_n}^T(\I{}))\ldots)$.
\end{definition}

\bigskip
\shortrules{Monadic}{$N$}{$\Proj{P}^T(N)$}{provided $P$ is a non-monadic predicate in the signature of $N$.}{MO}{15}

\bigskip
\shortrules{Shallow}{$N~\dot{\cup}~\{(\Gamma \imp E[s]_{p},\Delta;\pi)\}$}{\\ $~~~~~~~~~~~~~~~~~~~~~~~ N\cup\{(S(x),\Gamma_l \imp E[p/x],\Delta_l;\pi)$; $(\Gamma_r \imp S(s),\Delta_r;\pi)\}$}
{provided $s$ is complex, $\vert p\vert=2$, $x$ and $S$ fresh,  $\Gamma_l\{x \mapsto s\} \cup \Gamma_r = \Gamma$, $\Delta_l {\cup} \Delta_r = \Delta$, 
$\{Q(y)\in \Gamma \mid {y \in \vars(E[p/x],\Delta_l)\}} \subseteq \Gamma_l$, 
$\{Q(y)\in \Gamma  \mid {y \in \vars(s,\Delta_r) \}} \subseteq \Gamma_r$.}{SH}{15}

\bigskip
\shortrules{Linear 1}{$N~\dot{\cup}~\{(\Gamma \imp \Delta, E'[x]_{p},E[x]_q;\pi)\}$}{\\ $~~~~~~~~~~~~~~~~~~~~~~\;N\cup\{(\Gamma\sigma,\Gamma \imp \Delta, E'[x]_{p},E[q/x'];\pi \wedge \pi\sigma)\}$}
{provided $x'$ is fresh and $\sigma= \{x \mapsto x'\}$.}{LI}{15}

\bigskip
\shortrules{Linear 2}{$N~\dot{\cup}~\{(\Gamma \imp \Delta, E[x]_{p,q};\pi)\}$}{\\ $~~~~~~~~~~~~~~~~~~~~~~\;N\cup\{(\Gamma\sigma,\Gamma \imp \Delta, E[q/x'];\pi \wedge \pi\sigma)\}$}
{provided $x'$ is fresh, $p \neq q$ and $\sigma= \{x \mapsto x'\}$.}{LI}{15}
\bigskip

\shortrules{Refinement}{$N~\dot{\cup}~\{(C,\pi)\} $} {$N \cup  \{(C;\pi \wedge x \neq t),(C;\pi)\{x \mapsto t\}\}$}
{provided $x\in \vars(C)$, $t$ straight and $\vars(t) \cap \vars((C,\pi))=\emptyset$. }{\Refine}{15}
\bigskip

Note that variables are not renamed unless explicitly stated in the rule.
This means that original clauses and their approximated counterparts share variable names.
We use this to trace the origin of variables in the approximation.

The refinement transformation $\refine$ is not needed to eventually generate MSL(SDC) clauses, but can be 
used to achieve a  more fine-grained approximation of $N$, see below.

In the shallow transformation, $\Gamma$ and $\Delta$ are separated into 
$\Gamma_l$, $\Gamma_r$, $\Delta_l$, and $\Delta_r$, respectively.
The separation can be almost arbitrarily chosen as long as no atom from $\Gamma$, $\Delta$ is skipped.
However, the goal is to minimize the set of shared variables, i.e., the variables of $(\Gamma \imp E[s]_{p},\Delta;\pi)$ that are inherited by both approximation clauses, $\vars(\Gamma_r,s,\Delta_r) \cap \vars(\Gamma_l,E[p/x],\Delta_l)$.
If there are no  shared variables, the shallow transformation is satisfiability equivalent.
The conditions on $\Gamma_l$ and $\Gamma_r$ ensure that $S(x)$ atoms are not separated from the respective positive occurrence of  $x$ in subsequent shallow transformation applications.

Consider the clause $Q(f(x),y) \imp P(g(f(x),y))$. 
The simple shallow transformation $S(x'),Q(f(x),y) \imp P(g(x',y)); S(f(x))$ is not satisfiability equivalent -- nor with any alternative partitioning of $\Gamma$.
However, by replacing the occurrence of the extraction term $f(x)$ in $Q(f(x),y)$ with the fresh variable $x'$, 
the approximation $S(x'),Q(x',y) \imp P(g(x',y)); S(f(x))$ is satisfiability equivalent.
Therefore, we allow the extraction of $s$ from the terms in $\Gamma_l$ and require  $\Gamma_l\{x \mapsto s\} \cup \Gamma_r = \Gamma$.
  
We consider Linear~1 and Linear~2 as two cases of the same linear transformation rule.
Their only difference is whether the two occurrences of $x$ are in the same literal or not. 
The duplication of literals and constraints in $\Gamma$ and $\pi$ is not needed if $x$ does not occur in $\Gamma$ or $\pi$.

Further, consider a linear transformation $N \cup \{(C;\pi)\} \li  N \cup \{(C_a;\pi_a)\}$,
where a fresh variable $x'$ replaces an occurrence of a non-linear variable $x$ in $(C;\pi)$.  
Then, $(C_a;\pi_a)\{x' \mapsto x\}$ is equal to $(C;\pi)$ modulo duplicate literal elimination.
A similar property can be observed of a resolvent of $(C_l;\pi)$ and $(C_r;\pi)$ 
resulting from a shallow transformation $N \cup \{(C;\pi)\} \sh  N \cup \{(C_l;\pi), (C_r;\pi)\}$.
Note that by construction, $(C_l;\pi)$ and $(C_r;\pi)$ are not necessarily variable disjoint.
To simulate standard resolution, we need to rename at least the shared variables in one of them. 
  
\begin{definition}[$\apr$] \label{def:approx:apr}
We define $\apr$ as the priority rewrite system~\cite{Baeten89tcs} 
consisting of $\refine$, $\Rightarrow_{\operatorname{MO}}$, $\Rightarrow_{\operatorname{SH}}$  and  $\Rightarrow_{\operatorname{LI}}$ with 
priority $\refine \,>\, \Rightarrow_{\operatorname{MO}}  \,>\, \Rightarrow_{\operatorname{SH}} \,>\, \Rightarrow_{\operatorname{LI}}$,
where $\refine$ is only applied finitely many times.
\end{definition}

\begin{lemma}[$\apr$ is a Terminating Over-Approximation]\label{lem:approx:sound}
(i)~$\apr^*$ terminates,
(ii) if $N \apr N'$ and $N'$ is satisfiable, then $N$ is also satisfiable.
\end{lemma}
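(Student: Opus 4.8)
The plan is to prove the two claims separately: (ii) by a case analysis on which of the four rules realizes the step $N \apr N'$, and (i) by exhibiting a well-founded measure that strictly decreases under $\mo$, $\sh$, and $\li$, combined with the stipulation that $\refine$ fires only finitely often.

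\textbf{Claim (ii).} Since each step replaces a single clause (or applies $\Proj{P}^T$ globally) and leaves the rest of $N$ inside $N'$, it suffices to show, for each rule, that every model $\I{}$ of $N'$ is already a model of the replaced clause. For Monadic, $\Proj{P}^T$ is an injective encoding, so $\R{P}(\I{})\sat N$ whenever $\I{}\sat \Proj{P}^T(N)$; the step is in fact satisfiability equivalent. For Refinement, the constraint $x\neq t$ together with the instance $\{x\mapsto t\}$ partitions ground instances, $\ground{(C;\pi)} = \ground{(C;\pi\wedge x\neq t)}\cup\ground{((C;\pi)\{x\mapsto t\})}$, so the step is again satisfiability equivalent. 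For Shallow the key observation is that resolving the two produced clauses $(S(x),\Gamma_l\imp E[p/x],\Delta_l;\pi)$ and $(\Gamma_r\imp S(s),\Delta_r;\pi)$ on the fresh literal $S$ with $\mgu\{x\mapsto s\}$ yields exactly $(\Gamma\imp E[s]_p,\Delta;\pi)$, using $\Gamma_l\{x\mapsto s\}\cup\Gamma_r=\Gamma$, $\Delta_l\cup\Delta_r=\Delta$, and the freshness of $x$ (so $\pi\{x\mapsto s\}=\pi$); by soundness of SDC-Resolution the two clauses entail the original, hence $\I{}\sat N'$ gives $\I{}\sat N$. For Linear, identifying the fresh variable with the original one, i.e. applying $\{x'\mapsto x\}$, maps every ground instance of the original clause to a ground instance of its approximation (modulo duplicate literal elimination), so $\ground{(C;\pi)}\subseteq\ground{(C_a;\pi_a)}$ and once more $\I{}\sat N'$ implies $\I{}\sat N$.

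\textbf{Claim (i).} First I would record three monotonicity facts: neither $\sh$ nor $\li$ introduces a non-monadic predicate (the only new symbol $S$ is monadic), and $\li$ introduces no new complex subterm in a positive literal (it only renames a variable occurrence and copies antecedent literals). Hence once $\mo$ is inapplicable it stays so, and once $\sh$ is inapplicable it stays so under further $\li$ steps. Then I would use the lexicographic measure
\[
\mu(N)=(\,m(N),\; d_+(N),\; \ell_+(N)\,),
\]
where $m(N)$ counts the non-monadic predicate symbols occurring in $N$, $d_+(N)$ counts the positions $p$ with $|p|\geq 2$ pointing to a complex subterm inside a positive literal of $N$, and $\ell_+(N)$ counts the duplicate variable occurrences inside positive literals. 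A $\mo$ step strictly decreases $m$; an $\sh$ step leaves $m$ unchanged and strictly decreases $d_+$ (a direct count shows extraction removes exactly one counted position net, as the complex proper subterms of $s$ are merely relocated from $E$ into $S(s)$); an $\li$ step leaves $m$ and $d_+$ unchanged and strictly decreases $\ell_+$. Thus $\mu$ strictly decreases in the well-founded lexicographic order on $\mathbb{N}^3$ under every $\mo$, $\sh$, or $\li$ step. Finally, $\refine$ may raise $\mu$ (substituting a straight term can deepen positive literals), but by Definition~\ref{def:approx:apr} it fires only finitely often; it therefore splits any $\apr$-derivation into finitely many segments, each containing only $\mo$, $\sh$, $\li$ steps and hence finite by $\mu$. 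Consequently $\apr^*$ terminates.

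\textbf{Main obstacle.} The delicate point is termination, not soundness. Because $\mo$ can increase $d_+$ (wrapping $P(\args{t})$ as $T(f_P(\args{t}))$ pushes arguments one level deeper) and $\sh$ can increase $\ell_+$ (the extracted literal $S(s)$ may be non-linear), no single component suffices; the measure must be ordered lexicographically exactly along the rule priorities, and one must verify that each rule leaves every strictly-more-significant component non-increasing, which is precisely what the monotonicity facts supply. Isolating the finitely-many $\refine$ applications is what prevents the substitution-induced growth of $\mu$ from destroying well-foundedness.
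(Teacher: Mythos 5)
Your proof is correct and follows essentially the same route as the paper's: for (ii) the same rule-by-rule argument (injective re-encoding for Monadic, the ground-instance partition for Refinement, the resolvent of the two shallow clauses collapsing back to the original modulo duplicate literal elimination, and the renaming $\{x'\mapsto x\}$ for Linear), and for (i) a per-rule decreasing quantity combined with the rule priority and the stipulated finiteness of $\refine$ applications. Your packaging of termination as one lexicographic measure with explicit monotonicity facts (and a position-count instead of the paper's multiset of term depths for the shallow rule) is only a cosmetic variant of the paper's sequential-phases argument.
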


\begin{proof}
	(i)~The transformations can be considered sequentially, because of the imposed rule priority. 
	There are, by definition, only finitely many refinements at the beginning of an approximation $\apr^*$.
	The monadic transformation strictly reduces the number of non-monadic atoms.
	The shallow transformation strictly reduces the multiset of term depths of the newly introduced clauses compared
	to the removed parent clause.
	The linear transformation strictly reduces the number of duplicate variable occurrences in positive literals. 
	Hence $\apr$ terminates.
	
	(ii)  Let $N \cup \{ (C;\pi)\} \li N \cup \{ (C_a;\pi_a)\}$ where an occurrence of a variable $x$ in $(C;\pi)$ is replaced by a fresh $x'$.
	As $(C_a;\pi_a)\{x' \mapsto x\}$ is equal to $(C;\pi)$ modulo duplicate literal elimination,
	$\I{} \models (C;\pi)$ if $\I{} \models (C_a;\pi_a)$.
	Therefore, the linear transformation is an over-approximation.
	
	 Let $N \cup \{ (C;\pi)\} \sh N \cup \{ (C_l;\pi_l),(C_r;\pi_r)\}$ 
	and  $(C_a;\pi_a)$ 
	be the shallow $\rho$-resolvent.
	As  $(C_a;\pi_a)\rho^{-1}$  equals $(C;\pi)$ modulo duplicate literal elimination,
	$\I{} \models (C;\pi)$ if $\I{} \models (C_l;\pi_l), (C_r;\pi_r)$.
	Therefore, the shallow transformation is an over-approximation.
	
	 Let $N \mo \Proj{P}(N)=N'$.
	Then, $N=\R{P}(N') $.
	Let $\I{}$ be a model of $N'$ and $(C;\pi) \in N$.  
	Since $ \Proj{P}((C;\pi)) \in N'$ , $\I{} \sat \Proj{P}((C;\pi))$ and thus, $\R{P}(\I{})\sat (C;\pi)$.
	Hence, $\R{P}(\I{})$ is a  model of $N$.
	Therefore, the monadic transformation is an over-approximation. Actually, it
	is a satisfiability preserving transformation.
	
	  Let $N \cup \{(C;\pi)\}\refine N \cup \{(C;\pi \wedge x \neq t),(C;\pi)\{x \mapsto t\}\}$.
	Let $C\delta \in \ground{(C;\pi)} $. 
	If $x\delta$ is not an instance of $t$, then $\delta$ is a solution of $\pi \wedge x \neq t$ and  $C\delta \in \ground{(C;\pi \wedge x \neq t)}$.
	Otherwise, $\delta=\{x\mapsto t\}\delta'$ for some substitution $\delta'$.
	Then, $\delta$ is a solution of $\pi\{x\mapsto t\}$ and thus, $C\delta=C\{x\mapsto t\}\delta' \in \ground{(C\{x \mapsto t\};\pi\{x \mapsto t\})}$.
	Hence, $ \ground{(C;\pi)} \subseteq  \ground{(C;\pi \wedge x \neq t)} \cup  \ground{(C;\pi)\{x \mapsto t\}}.$
	Therefore, if $\I{}$ is  a model of $N \cup \{(C;\pi \wedge x \neq t),(C;\pi)\{x \mapsto t\}\}$, then   $\I{}$ is also a model of $N \cup \{(C;\pi)\}$.  \qed
\end{proof}

Note that $\refine$ and $\mo$  are also satisfiability preserving transformations.

\begin{corollary}\label{cor:approx:sound}
If $N \apr^* N'$ and $N'$ is satisfied by a model $\I{}$, 
then  $\R{\Sigma}(\I{})$  is a model of $N$.
\end{corollary}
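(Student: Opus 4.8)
The plan is to obtain the corollary by induction on the length of the derivation $N \apr^* N'$, traversing it backwards and composing the model transformations that the proof of Lemma~\ref{lem:approx:sound} already attaches to each single rule. The first thing I would record is how each rule acts on a witnessing model: that proof shows that for a Refinement, Shallow, or Linear step $N_1 \apr N_2$ the \emph{same} interpretation that satisfies $N_2$ already satisfies $N_1$, whereas for a Monadic step $N_1 \mo \Proj{P}^T(N_1) = N_2$ a model $\I{}$ of $N_2$ yields the model $\R{P}(\I{})$ of $N_1$. Reading the derivation backwards from $N'$ to $N$, the interpretation $\I{}$ is therefore left unchanged at every Refinement, Shallow, and Linear step and is composed with $\R{P}$ at every Monadic step, so that the interpretation arriving at $N$ is the image of $\I{}$ under the composition of the $\R{P}$ maps contributed by the Monadic steps.

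Next I would fix the phase structure of the derivation. Since $\apr$ is the priority rewrite system of Definition~\ref{def:approx:apr} with $\refine > \mo > \sh > \li$ and $\refine$ is used only finitely often, and since neither Refinement nor any later rule changes the set of non-monadic predicates once the Monadic rule has been applied, the derivation decomposes as $N \refine^* N_1 \mo^* \Proj{\Sigma}^T(N_1) \sh^* \cdots \li^* N'$. The Monadic rule applied to $P$ removes every occurrence of $P$, so each non-monadic predicate $P_1,\dots,P_n$ of the original signature $\Sigma$ fires $\mo$ exactly once, and the symbols introduced afterwards --- the projection predicate $T$ and the predicates $S$ created by the Shallow rule --- are all monadic, so no further Monadic step is enabled and no predicate is encoded twice. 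Thus the Monadic phase as a whole turns $N_1$ into $\Proj{\Sigma}^T(N_1)$.

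It then remains to identify the composite backward transformation with $\R{\Sigma}$. Here I would use that $\Proj{P}^T$ rewrites only atoms with head $P$ and is the identity on all other predicates, so for distinct predicates the projections, and hence the inverses $\R{P_i}$, commute; the order in which the Monadic steps happen is therefore irrelevant and their composed inverse is exactly $\R{\Sigma}$ in the sense of Definition~\ref{def:approx:termencod}. I would also verify that $\R{\Sigma}$ interacts correctly with the symbols present in $N'$: every $T$-atom produced along the derivation has the shape $T(f_{P_i}(\cdot))$ with a shallow argument, since the Shallow rule only extracts such terms and introduces the genuinely fresh monadic $S$; hence $\R{\Sigma}$ decodes each such atom to the corresponding $P_i$-atom and acts as the identity on the $S$-atoms.

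Putting the pieces together yields the claim: pushing the model $\I{}$ of $N'$ backwards through the Linear and Shallow phases leaves it unchanged and gives a model of $\Proj{\Sigma}^T(N_1)$; applying $\R{\Sigma}$ gives a model of $N_1$; and the finitely many Refinement steps, being model preserving, carry it back unchanged to $N$, so $\R{\Sigma}(\I{}) \sat N$. I expect the main obstacle to be precisely the bookkeeping of the third paragraph --- establishing commutativity of the per-predicate projections and checking that $\R{\Sigma}$ correctly inverts, or is inert on, every symbol occurring in $N'$ --- rather than any new model-theoretic content, since all the single-step over-approximation facts are already supplied by Lemma~\ref{lem:approx:sound}.
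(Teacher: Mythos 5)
Your proposal is correct and follows essentially the same route as the paper: the paper's proof is a one-line appeal to the per-rule facts established in the proof of Lemma~\ref{lem:approx:sound} (the witnessing model is unchanged under Refinement, Shallow, and Linear steps, and is pulled back through $\R{P}$ at each Monadic step), which is exactly the composition-by-induction argument you spell out. Your additional bookkeeping --- the phase structure induced by the rule priorities and the commutativity of the per-predicate decodings, which together identify the composed backward maps with $\R{\Sigma}$ --- is detail the paper leaves implicit, not a different method.
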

\begin{proof}
	Follows from Lemma~\ref{lem:approx:sound} (ii)-(v).\qed
\end{proof}

On the basis of $\apr$ we can define an ancestor relation $\anc$ that relates
clauses, literal occurrences, and variables  with respect to approximation. 
This relation is needed in order to figure out the exact clause, literal, variable for
refinement.

\begin{definition}[The Shallow Resolvent]\label{def:approx:resolvent}
	Let $N \cup \{(C;\pi)\} \sh  N \cup \{(C_l;\pi), (C_r;\pi)\}$ with $C=\Gamma \imp E[s]_{p},\Delta$, $C_l=S(x),\Gamma_l \imp E[p/x],\Delta_l$ and $C_r= \Gamma_r \imp S(s),\Delta_r$.
	Let $x_1,\ldots,x_n$ be the variables shared between $C_l$ and $C_r$ and $\rho=\{x_1 \mapsto x'_1, \ldots, x_n \mapsto x'_n\}$ be a variable renaming with $x'_1,\ldots,x'_n$ fresh in $C_l$ and $C_r$.
	We define $(\Gamma_l\{x \mapsto s\rho \},\Gamma_r\rho \imp E[p/s\rho],\Delta_l,\Delta_r\rho;\pi \wedge \pi\rho)$ as the shallow $\rho$-resolvent.
\end{definition}

Let $(C_a;\pi_a)$ be the shallow  $\rho$-resolvent of $N \cup \{(C;\pi)\} \sh  N \cup \{(C_l;\pi), (C_r;\pi)\}$.
Note that for any two ground instances $C_l\delta_l$ and $C_r\delta_r$, their resolvent is a ground instance of $(C_a;\pi_a)$.
Furthermore, using the reverse substitution $\rho^{-1}= \{x'_1 \mapsto x_1, \ldots, x'_n \mapsto x_n\}$, 
$(C_a;\pi_a)\rho^{-1}= (\Gamma_l\{x \mapsto s \},\Gamma_r \imp E[s]_{p},\Delta_l,\Delta_r;\pi \wedge \pi)$ is equal to $(C;\pi)$ modulo duplicate literal elimination. 
This is because,  $\Delta_l \cup \Delta_r = \Delta$ and $\Gamma_l\{x \mapsto s \} \cup \Gamma_r = \Gamma$  by definition of $\sh$ and $\pi \wedge \pi$ is equivalent to $ \pi$.

Next, we establish parent relations that link original and approximated clauses, as well as their variables and literals. 
Together the parent, variable and literal relations will allow us to not only trace  any approximated clause back to their origin,
but also predict what consequences changes to the original set will have on its approximations.

For the following definitions, we assume that clause and literal sets are lists and that $\Proj{P}^T$ and substitutions act as mappings.
This means we can uniquely identify clauses and literals by their position in those lists.
Further, for every shallow transformation $N \sh  N'$,
we will also include the shallow resolvent in the parent relation as if it were a member of $N'$.

\begin{definition}[Parent Clause]\label{def:approx:pclause}
	For an approximation step $N \apr N'$ and two clauses $(C;\pi)\in N$ and $(C';\pi')\in N'$,
	we define $[(C;\pi), N] \anc [(C';\pi'), N']$ expressing that $(C;\pi)$ in $N$ is the parent clause of $(C';\pi')$ in $N'$:\\
	If $N \mo \Proj{P}^T(N)$, then 
	
	$[(C;\pi),N] \anc [\Proj{P}^T((C;\pi)), \Proj{P}^T(N)]$ for all $(C;\pi) \in N$.\\
	If $N=N'' \cup \{(C;\pi)\} \sh N'' \cup \{(C_l;\pi_l),(C_r;\pi_r)\}=N'$, then 
	
	$[(D,\pi'),N] \anc [(D,\pi'),N']$ for all $(D,\pi') \in N''$ and 
	
	$[(C,\pi),N] \anc [(C_l;\pi_l),N']$,
	
	$[(C,\pi),N] \anc [(C_r;\pi_r),N']$ and 
	
	$[(C,\pi),N] \anc [(C_a;\pi_a),N']$ for any shallow resolvent  $(C_a;\pi_a)$.\\
	If $N=N'' \cup \{(C;\pi)\} \li N'' \cup \{(C_a;\pi_a)\}=N'$, then 
	
	$[(D,\pi'),N] \anc [(D,\pi'),N']$ for all $(D,\pi') \in N''$ and 
	
	$[(C,\pi),N] \anc [(C_a,\pi_a),N']$. \\
	If $N=N'' \cup \{(C;\pi)\}\refine N'' \cup \{(C;\pi \wedge x \neq t),(C;\pi)\{x \mapsto t\}\}=N'$, then 
	
	$[(D,\pi'),N] \anc [(D,\pi'),N']$ for all $(D,\pi') \in N''$ ,
	
	$[(C,\pi),N] \anc [(C;\pi \wedge x \neq t),N']$ and 
	
	$[(C,\pi),N] \anc [(C;\pi)\{x \mapsto t\}, N'] $.
\end{definition}

\begin{definition}[Parent Variable]\label{def:approx:pvar}
	Let $N \apr N'$ be an approximation step and $[(C;\pi),N] \anc [(C';\pi'),N']$.
	For two variables $x$ and $y$,
	we define $[x,(C;\pi), N] \anc [y,(C';\pi'), N']$ expressing that $x \in \vars(C)$ is the parent variable of $y \in \vars(C')$:\\
	If $x\in \vars((C;\pi))\cap \vars((C';\pi'))$, then 
	
	$[x,(C;\pi),N] \anc [x,(C';\pi'),N']$.\\
	If $N \sh N'$ and  $(C',\pi')$ is the shallow $\rho$-resolvent, 
	
	$[x_i,(C;\pi),N] \anc [x_i\rho,(C';\pi'),N']$ for each $x_i$ in the domain of $\rho$.\\
	If $N \li N'$, $C= \Gamma \imp \Delta[x]_{p,q}$ and  $C'=\Gamma\{x \mapsto x'\},\Gamma \imp \Delta[q/x']$, then 
	
	$[x,(C;\pi),N] \anc [x',(C';\pi'),N']$.
\end{definition}

Note that if $N \sh N'$ and $x$ is the fresh extraction variable in $(C_l;\pi_l)$, then $x$  has no parent variable.
For literals, we actually further specify the relation on the positions within literals of a clause $(C;\pi)$ using pairs $(L,r)$ of literals and positions. 
We write $(L,r)\in C$ to denote that $(L,r)$ is a literal position in $(C;\pi)$ if $L\in C$ and $r\in \pos(L)$.  
Note that a literal position $(L,r)$ in $(C;\pi)$ corresponds to the term $L\vert_r$.

\begin{definition}[Parent literal position]\label{def:approx:pterm}
	Let $N \apr N'$ be an approximation step and $[(C;\pi),N] \anc [(C';\pi'),N']$.
	For two literal positions $(L,r)$ and $(L',r')$,
	we define $[r,L,(C;\pi), N] \anc [r',L',(C';\pi'), N']$ expressing that $(L,r)$ in $(C;\pi)$ is the parent literal position of $(L',r')$ in $(C';\pi')$:\\
	If $(C;\pi)=(C';\pi')$, then 
	
	$[r,L,(C;\pi),N] \anc [ r,L,(C';\pi'),N']$ for all $(L,r)\in C$.\\
	If $N \refine N'$ and $(C',\pi')=(C;\pi \wedge x \neq t)$,  then 
	
	$[r,L,(C;\pi),N ] \anc [ r,L,(C';\pi'),N']$ for all $(L,r)\in C$.\\
	If $N \refine N'$ and $(C',\pi')=(C;\pi)\{x \mapsto t\}$,  then 
	
	$[r,L,(C;\pi),N ] \anc [ r,L\{x \mapsto t\},(C';\pi'),N']$ for all $(L,r)\in C$.\\
	If $N \mo \Proj{P}^T(N)=N'$, then 
	
	$[\varepsilon,P(\args{t}),(C;\pi),N] \anc [ \varepsilon,T(f_p(\args{t})),(C';\pi'),N']$ for all $P(\args{t})\in C$ and
	
	$[r,P(\args{t}),(C;\pi),N] \anc [ 1.r,T(f_p(\args{t})),(C';\pi'),N']$ for all $(P(\args{t}),r)\in C$.\\
	If $N \sh N'$, $C= \Gamma \imp E[s]_{p},\Delta$ and $C'=S(x),\Gamma_l \imp E[p/x],\Delta_l$, then 
	
	$[r,E[s]_{p},(C;\pi),N] \anc [ r,E[p/x],(C';\pi'),N']$ for all $r\in \pos(E[p/x])$,  
	
	$[p,E[s]_{p},(C;\pi),N] \anc [ r,S(x),(C';\pi'),N']$ for all $r\in \pos(S(x))$,
	
	$[r,L\{x \mapsto s\},(C;\pi),N] \anc [ r,L,(C';\pi'),N']$ for all $(L,r)\in \Gamma_l$,  
	
	$[r,L,(C;\pi),N] \anc [ r,L,(C';\pi'),N']$ for all $(L,r)\in \Delta_l$.\\
	If $N \sh N'$, $C= \Gamma \imp E[s]_{p},\Delta$ and $C'=\Gamma_r \imp S(s),\Delta_r$, then 
	
	$[p,E[s]_{p},(C;\pi),N] \anc [ \varepsilon ,S(s),(C';\pi'),N']$,
	
	$[pr,E[s]_{p},(C;\pi),N] \anc [ 1.r,S(s),(C';\pi'),N']$ for all $r\in \pos(s)$, and
	
	$[r,L,(C;\pi),N] \anc [ r,L,(C';\pi'),N']$ for all $(L,r)\in \Gamma_r\cup\Delta_r$.\\
	If $N \sh N'$, $C= \Gamma \imp E[s]_{p},\Delta$ and $(C',\pi')$ is the shallow $\rho$-resolvent, then 
	
	$[r,E[s]_{p},(C;\pi),N] \anc [ r,E[p/s\rho],(C';\pi'),N']$ for all $r\in \pos(E[p/s\rho])$,  
	
	$[r,L\{x \mapsto s\},(C;\pi),N] \anc [ r,L\{x \mapsto s\rho \},(C';\pi'),N']$ for all $(L,r)\in \Gamma_l$,  
	
	$[r,L,(C;\pi),N] \anc [ r,L\rho,(C';\pi'),N']$ for all $(L,r)\in \Gamma_r\cup\Delta_r$, and
	
	$[r,L,(C;\pi),N] \anc [ r,L,(C';\pi'),N']$ for all $(L,r)\in \Delta_l$.\\
	If $N \li N'$, $C= \Gamma \imp \Delta, E'[x]_{p},E[x]_q$ and $C'=\Gamma\{x \mapsto x'\},\Gamma \imp \Delta, E'[x]_{p},E[q/x']$,  
	
	$[r,E'[x]_{p},(C;\pi),N] \anc [ r,E'[x]_p,(C';\pi'),N']$ for all $r\in \pos(E'[x]_p)$, 
	
	$[r,E[x]_{q},(C;\pi),N] \anc [ r,E[q/x'],(C';\pi'),N']$ for all $r\in \pos(E[q/x'])$,,
	
	$[r,L,(C;\pi),N] \anc [ r,L\{x \mapsto x'\},(C';\pi'),N']$ for all $(L,r)\in \Gamma$, 
	
	$[r,L,(C;\pi),N] \anc [ r,L,(C';\pi'),N']$ for all $(L,r)\in \Gamma$, and 
	
	$[r,L,(C;\pi),N] \anc [ r,L,(C';\pi'),N']$ for all $(L,r)\in \Delta$.\\
	If $N \li N'$, $C= \Gamma \imp \Delta, E[x]_{p,q}$ and $C'=\Gamma\{x \mapsto x'\},\Gamma \imp \Delta, E[q/x']$,  then 
	
	$[r,E[x]_{p,q},(C;\pi),N] \anc [ r,E[q/x'],(C';\pi'),N']$ for all $r\in \pos(E[q/x'])$, 
	
	$[r,L,(C;\pi),N] \anc [ r,L\{x \mapsto x'\},(C';\pi'),N']$ for all $(L,r)\in \Gamma$, 
	
	$[r,L,(C;\pi),N] \anc [ r,L,(C';\pi'),N']$ for all $(L,r)\in \Gamma$, and 
	
	$[r,L,(C;\pi),N] \anc [ r,L,(C';\pi'),N']$  for all $(L,r)\in \Delta$.\\
\end{definition}

\begin{figure}[h]
	\begin{tikzpicture}
	\path (0,5.5) node(x1) {$\Gamma$} ++(0.5,0) node(x2) {$\rightarrow$}  ++(0.5,0) node(x3) {$E$} ++(0.45,0) node(x4) {$[s]_p,$} ++(0.5,0) node(x5) {$\Delta$}
	(0,4) node(y1) {$\Gamma_l,$} ++(0.62,0) node(y2) {$S(x)$} ++(0.5,0) node(y3) {$\rightarrow$}  ++(0.5,0) node(y4) {$E$} ++(0.6,0) node(y5) {$[p/x],$} ++(0.65,0) node(y6) {$\Delta_l$};
	\draw [->] (x1) edge (y1) (x3) edge (y4) (x4) edge (y2) (x5) edge (y6);
	
	\path (3.85,5.5) node(x1) {$\Gamma$} ++(0.5,0) node(x2) {$\rightarrow$}  ++(0.5,0) node(x3) {$E$} ++(0.45,0) node(x4) {$[s]_p,$} ++(0.5,0) node(x5) {$\Delta$}
	(3.85,4) node(y1) {$\Gamma_r$}  ++(0.5,0) node(y2) {$\rightarrow$} ++(0.62,0) node(y3) {$S(x),$} ++(0.65,0) node(y4) {$\Delta_r$};
	\draw [->] (x1) edge (y1) (x4) edge (y3)  (x5) edge (y4);
	
	\path (7.25,5.5) node(x1) {$\Gamma$} ++(0.5,0) node(x2) {$\rightarrow$}  ++(0.5,0) node(x3) {$E$} ++(0.45,0) node(x4) {$[s]_p,$} ++(0.5,0) node(x5) {$\Delta$}
	(7.25,4) node(y1) {$\Gamma_l\{x \mapsto s\rho\},$}  ++(1.25,0) node(y2) {$\Gamma_r\rho$} ++(0.5,0) node(y3) {$\rightarrow$} ++(0.8,0) node(y4) {$E[p/s\rho],$} ++(0.9,0) node(y5) {$\Delta_l,$} ++(0.6,0) node(y6) {$\Delta_r\rho$};
	\draw [->] (x1) edge (y1) (x1) edge (y2) (x3) edge (y4)  (x5) edge (y5) (x5) edge (y6);
	
	\path (1,3.4) node(x) {shallow left} ++(3.75,0) node(x) {shallow right} ++(4,0) node(x) {shallow resolvent};
	
	\path (1,2.5) node(x1) {$\Gamma$} ++(0.5,0) node(x2) {$\rightarrow$} ++(0.5,0) node(x3) {$\Delta,$} ++(0.75,0) node(x4) {$E'[x]_p,$} ++(1,0) node(x5) {$E[x]_q$} 
	(1,1) node(y1) {$\Gamma\sigma,$} ++(0.62,0) node(y2) {$\Gamma$} ++(0.5,0) node(y3) {$\rightarrow$}  ++(0.65,0) node(y4) {$\Delta,$} ++(0.75,0) node(y5) {$E'[x]_p,$} ++(1.1,0) node(y6) {$E[q/x']$} ;
	\draw [->] (x1) edge (y1) (x1) edge (y2) (x3) edge (y4) (x4) edge (y5) (x5) edge (y6);
	
	\path (6.5,2.5) node(x1) {$\Gamma$} ++(0.5,0) node(x2) {$\rightarrow$} ++(0.5,0) node(x3) {$\Delta,$} ++(0.8,0) node(x4) {$E[x]_{p,q}$}  
	(6.5,1) node(y1) {$\Gamma\sigma,$} ++(0.62,0) node(y2) {$\Gamma$} ++(0.5,0) node(y3) {$\rightarrow$}  ++(0.65,0) node(y4) {$\Delta,$} ++(0.8,0) node(y5) {$E[q/x']$} ;
	\draw [->] (x1) edge (y1) (x1) edge (y2) (x3) edge (y4) (x4) edge (y5);
	
	\path (3,0.4) node(x) {linear 1} ++(5,0) node(x) {linear 2};
	
	\end{tikzpicture}

	\caption{Visual representation of the parent literal position relation (Definition~\ref{def:approx:pterm})}
\end{figure}

The  transitive closures of each parent relation are called ancestor relations.

The over-approximation of a clause set $N$ can introduce resolution refutations 
that have no corresponding equivalent in $N$ which we consider a lifting failure.
Compared to our previous calculus~\cite{Teucke2015}, the lifting process is identical
with the exception that there is no case for the removed Horn transformation.
We only update the definition of conflicting cores to consider constrained clauses.

\begin{definition}[Conflicting Core]\label{def:lifting:core}
	A finite set of unconstrained clauses and a solvable constraint $(N^\bot;\pi)$ are a conflicting core if 
	$N^\bot\delta$ is unsatisfiable for all solutions $\delta$ of $\pi$ over $\vars(N^\bot)\cup \lvar(\pi)$. 
	A conflicting core $(N^\bot;\pi)$ is a conflicting core of the constrained clause set $N$ if for every $C\in N^\bot$ there is a clause $(C',\pi')\in N$
	such that $(C;\pi)$ is an instance of $(C';\pi')$ modulo duplicate literal elimination. 
	The clause $(C';\pi')$ is then called the instantiated clause of $(C;\pi)$ in $(N^\bot;\pi)$.
	We call $(N^\bot;\pi)$ complete if for every clause $C \in N^\bot$ and literal $L\in C$, there exists a clause $D\in N^\bot$ with $\overline L\in D$. 
\end{definition}

A conflicting core is a generalization of a ground unsatisfiability core that allows global variables to act as parameters. 
This enables more efficient lifting and refinement compared to a simple ground unsatisfiable core.
We show some examples at the end of this section.

We discuss the potential lifting failures and the corresponding refinements only for the linear and shallow case 
because lifting the satisfiability equivalent monadic and refinement transformations always succeeds.
To reiterate from our previous work:
in the linear case, there exists a clause in the conflicting core that is not an instance of the original clauses.
In the shallow case, there  exists a pair of clauses whose resolvent is not an instance of the original clauses.
We combine these two cases by introducing the notion of a lift-conflict. 
 
\begin{definition}[Conflict]\label{def:refine:conflict}
Let $N \cup \{(C,\pi)\} \li  N\cup \{(C_a,\pi_a)\}$ and $N^\bot$ be a complete  ground conflicting core of $N\cup \{(C_a,\pi_a)\}$.
We call a conflict clause $C_c \in N^\bot$ with the instantiated clause $(C_a,\pi_a)$ a lift-conflict if $C_c$ is not an instance of $(C,\pi)$ modulo duplicate literal elimination.
Then, $C_c$ is an instance of $(C_a,\pi_a)$, which we call the conflict clause of $C_c$.

Let $N \cup \{(C,\pi)\} \sh  N \cup \{(C_l,\pi_l),(C_r,\pi_r)\}$, $(C_a;\pi_a)$ be the shallow resolvent and $N^\bot$ be a complete ground conflicting core of $N\cup  \{(C_l,\pi_l),(C_r,\pi_r)\}$.
We call the resolvent $C_c$ of $C_l\delta_l\in N^\bot$ and $C_r\delta_r \in N^\bot$ a lift-conflict if $C_c$ is not an instance of $(C,\pi)$ modulo duplicate literal elimination. 
Then, $C_c$ is an instance of $(C_a;\pi_a)$, which we call the conflict clause of $C_c$.
\end{definition}
 
The goal of refinement is to instantiate the original parent clause in such a way that is both satisfiability equivalent and prevents the lift-conflict 
after approximation.  
Solving the refined approximation will then either necessarily produce a complete saturation or  a new refutation proof, 
because its conflicting core has to be different.
For this purpose, we use the refinement transformation to segment the original parent clause $(C;\pi)$ into two parts $(C;\pi \wedge x \neq t)$ and $(C;\pi)\{x \mapsto t\}$. 

For example, consider  $N$ and its linear transformation $N'$.\newline
\centerline{$\begin{array}{r@{\,\imp\,}lcr@{\,\imp\,}l}
	  &  P(x,x) &  \;\Rightarrow_{\operatorname{LI}}\; &  &P(x,x')\\
	P(a,b) & & \;\apr^0\; & P(a,b)&\\
	\end{array}$}
The ground conflicting core of $N'$ is\newline
\centerline{$\begin{array}{r@{\,\imp\,}l}
	& P(a,b) \\
	P(a,b)& \\
	\end{array}$}
Because  $P(a,b)$ is not an instance of $P(x,x)$, lifting fails. 
$P(a,b)$ is the lift-conflict. 
Specifically, $\{x\mapsto a\}$ and $\{x \mapsto b\}$ are conflicting substitutions for the parent variable $x$.
We pick $\{x\mapsto a\}$ to segment $P(x,x)$ into $ (P(x,x);x \neq a)$ and $P(x,x)\{x \mapsto a\}$.
Now, any descendant of $(P(x,x);x \neq a)$ cannot have $a$ at the position of the first $x$, and
any descendant of $P(x,x)\{x \mapsto a\}$ must have an $a$ at the position of the second $x$.
Thus, $P(a,b)$ is excluded in both cases and no longer appears as a lift-conflict.

To show that the lift-conflict will not reappear in the general case, we use that the conflict clause and its ancestors 
have strong ties between their term structures and constraints.  

\begin{definition}[Constrained Term Skeleton]\label{def:refine:termskel}
The constrained term skeleton of a term $t$ under constraint $\pi$, $\skt(t,\pi)$, is defined as 
the normal form of the following transformation: \newline
\centerline{$\begin{array}{c}
 (t[x]_{p,q} ;\pi)  \Rightarrow_{\skt} (t[q/x'] ;\pi \wedge \pi\{x \mapsto x'\}), \text{ where } p \neq q \text{ and $x'$ is fresh}. 
 \end{array}$}
\end{definition}

The constrained term skeleton of a term $t$ is essentially a linear version of $t$ where the restrictions on each variable position imposed by $\pi$ are preserved. 
For $(t,\pi)$ and a solution $\delta$ of $\pi$, $t\delta$  is called a ground instance of $(t,\pi)$.

\begin{lemma}\label{lem:refinement:ancestor_skel}
Let $N_0 \apr^*  N_k$, $(C_k;\pi_k)$ in $N$ with the ancestor clause $(C_0;\pi_0)\in N_0$ and $N^\bot_k$ be a complete ground conflicting core of $N_k$. 
Let $\delta$ be a solution of $\pi_k$ such that $C_k\delta$ is in $N^\bot_k$.
If $(L',q')$ is a literal position in $(C_k;\pi_k)$ with the ancestor $(L,q)$ in $(C_0,\pi_0)$, 
then (i) $L'\delta\vert_{q'}$ is an instance of $\skt(L\vert_q,\pi_0)$, 
(ii) $q=q'$ if $L$ and $L'$ have the same predicate, and 
(iii) if $L'\vert_{q'}=x$ and there exists an ancestor variable $y$ of $x$ in $(C_0,\pi_0)$, then $L\vert_{q}=y$.
\end{lemma}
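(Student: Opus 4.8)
The plan is to prove (i)--(iii) \emph{simultaneously} by induction on the length $k$ of the derivation $N_0 \apr^* N_k$, peeling off the \emph{first} step. Writing $N_0 \apr N_1 \apr^* N_k$, all three ancestor relations factor through $N_1$: the clause $(C_0;\pi_0)$ is the parent of some $(C_1;\pi_1)\in N_1$ whose descendant is $(C_k;\pi_k)$, the literal position $(L,q)$ is the parent of some $(L_1,q_1)$ in $(C_1;\pi_1)$ whose descendant is $(L',q')$, and an ancestor variable $y$ of $x$ in $(C_0;\pi_0)$ factors as $y \anc y_1 \anc^* x$ with $y_1$ an ancestor variable of $x$ in $(C_1;\pi_1)$. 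Since $N^\bot_k$ and $\delta$ are unchanged, the induction hypothesis applies verbatim to $N_1 \apr^* N_k$ and gives: $L'\delta\vert_{q'}$ is an instance of $\skt(L_1\vert_{q_1},\pi_1)$; $q_1=q'$ when $L_1,L'$ share a predicate; and $L_1\vert_{q_1}=y_1$ when $L'\vert_{q'}=x$ has ancestor variable $y_1$ in $(C_1;\pi_1)$. It then remains to propagate each statement across the single step $N_0 \apr N_1$ by case analysis on the rule, using Definitions~\ref{def:approx:pclause}--\ref{def:approx:pterm}. The base case $k=0$ is immediate: all three relations are the identity, and $L\vert_q$ is a ground instance of its own skeleton $\skt(L\vert_q,\pi_0)$ via the substitution folding each fresh skeleton variable back, which composed with $\delta$ still solves the skeleton constraint because $\delta$ solves $\pi_0=\pi_k$.

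For the step, $\mo$ and $\refine$ are bookkeeping. The rule $\mo$ replaces $P(\args{t})$ by $T(f_P(\args{t}))$ and shifts argument positions $r\mapsto 1.r$, but leaves the subterms at corresponding positions literally unchanged and keeps $\pi_1=\pi_0$, so $\skt(L_1\vert_{q_1},\pi_1)=\skt(L\vert_q,\pi_0)$ and (i) transfers. For $\refine$ the first child only strengthens the constraint and the second applies $\{x\mapsto t\}$; in both cases the term at a position is preserved or further instantiated, so ``instance of $\skt(L\vert_q,\pi_0)$'' is preserved. The \textbf{Linear} rule is where the skeleton is actually created: renaming one occurrence of a repeated variable while copying the constraint is \emph{exactly} one $\Rightarrow_{\skt}$ rewrite of $L\vert_q$ under $\pi_0$, so $L_1\vert_{q_1}$ and $L\vert_q$ have the same skeleton normal form, whence $\skt(L_1\vert_{q_1},\pi_1)=\skt(L\vert_q,\pi_0)$ up to renaming and (i) transfers directly. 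Here the fresh $x'$ sits at the same position $q_1=q$ with parent variable $x$, so $L_1\vert_{q_1}=x'$ yields $L\vert_q=x$, giving (iii).

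For (ii) and (iii) the uniform principle, visible in all the cases above, is that a position changes only when the predicate of its literal is simultaneously replaced by a symbol \emph{fresh} to $N_0$ (the $T$ of $\mo$ or the $S$ of $\sh$), and such symbols never recur. Hence if $L$ and $L'$ carry the same predicate, that predicate was never the changed one, so the position is invariant along the whole chain, proving (ii). For (iii), the only variable introduced \emph{without} a parent is the fresh extraction variable of $\sh$; were the traced-back term $L_1\vert_{q_1}$ this variable it would possess no ancestor variable in $(C_0;\pi_0)$, contradicting the hypothesis of (iii). Thus this situation is excluded, and in every remaining sub-case the parent-variable and parent-position relations align by construction, so $L_1\vert_{q_1}=y_1$ with parent $y$ forces $L\vert_q=y$.

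The main obstacle is the \textbf{shallow} rule for part (i). When $(L',q')$ traces back through the shallow-left clause $(S(x),\Gamma_l \imp E[p/x],\Delta_l;\pi)$ to the extracted position $p$, the corresponding term in $(C_1;\pi_1)$ is the fresh variable $x$, so the induction hypothesis only yields that $L'\delta\vert_{q'}$ is an instance of $\skt(x,\pi_1)=x$, which is vacuous; yet we must show it is an instance of $\skt(s,\pi_0)$, recovering the structure of the subterm $s$ that was discarded on the left. This is precisely where \emph{completeness} of the conflicting core enters: the negative literal $S(x)$ descends, coupled to the positive occurrence of $x$, into a negative $S$-literal of $C_k\delta\in N^\bot_k$ with argument $L'\delta\vert_{q'}$, and by completeness $N^\bot_k$ contains a clause with the complementary positive literal $S(L'\delta\vert_{q'})$. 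Since $S$ is fresh, that clause descends from the shallow-\emph{right} clause $(\Gamma_r \imp S(s),\Delta_r;\pi)$, whose positive literal carries $s$ with its structure \emph{intact}; applying the already-established statements to that occurrence of $s$ (the right clause preserves the subterm up to the renaming $\rho$, exactly as in the linear case) shows $L'\delta\vert_{q'}$ is a ground instance of $s$, and since $s$ is an instance of $\skt(s,\pi_0)$ whose witnessing substitution solves the skeleton constraint -- the rules copy constraints just as $\skt$ does -- this finishes (i). The remaining shallow sub-cases (positions of $E$ away from $p$, the right clause, and the shallow $\rho$-resolvent) leave the traced subterm unchanged up to $\rho$, so (i) transfers as before.
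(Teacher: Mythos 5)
Your proof takes essentially the same route as the paper's: induction on the length of $N_0 \apr^* N_k$ peeling off the first step, trivial propagation through $\mo$ and $\refine$, the observation that the linear rule performs exactly one $\Rightarrow_{\skt}$ rewrite, and --- crucially --- the use of completeness of the conflicting core to find the complementary positive $S$-literal, which must descend from the shallow-right clause since $S$ is fresh, so the induction hypothesis applied to that clause recovers the structure of the extracted term $s$. The only slight imprecision is the claim that $L'\delta\vert_{q'}$ is a ground instance of $s$ itself; the induction hypothesis yields only that it is an instance of $\skt(s,\pi_r)$, but since the shallow rule copies the constraint ($\pi_r=\pi_0$) this is exactly the desired conclusion, just as in the paper.
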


\begin{proof}
	By induction on the length of the approximation $N_0 \apr^* N_k$.
	
	The base case $N_k=N_0$, is trivial.
	Let $N_0= N \cup \{ (C;\pi)\} \sh N \cup \{ (C_l;\pi_l),(C_r;\pi_r)\}=N_k$, $(C_k;\pi_k)$ be the shallow $\rho$-resolvent and $C_k\delta$ be the resolvent of two instances of $(C_l;\pi_l)$ and $(C_r;\pi_r)$ in $N^\bot_k$.
	Then, $(C_k;\pi_k)\rho^{-1}$ is equal to $ (C;\pi)$ modulo duplicate literal elimination.
	Thus, by definition $(L,q)=(L',q')\rho^{-1}$.
	Therefore, (i) $L'\delta\vert_{q'}$ is an instance of $\skt(L\vert_q,\pi_0)$, 
	(ii) $q=q'$ if $L$ and $L'$ have the same predicate, and 
	(iii) if $L'\vert_{q'}=x$ and there exists an ancestor variable $y$ of $x$ in $(C_0,\pi_0)$, then $L\vert_{q}=y$.
	
	Now, let  $ N_0 \apr N_{1} \apr^* N_k$.
	Since $(L',p)$ has an ancestor literal position in $(C_0,\pi_0)$,
	the ancestor clause of $(C_k;\pi_k)$ in $N_1$, $(C_1,\pi_1)$, contains the the ancestor literal position $(L_1,{q_1})$, which has $(L,q)$ as its parent literal position. 
	By the induction hypothesis on $N_{1} \apr^* N_k$, 
	(i) $L'\delta\vert_{q'}$ is an instance of $\skt(L_1\vert_{q_1},\pi_1)$, 
	(ii) $q_1=q'$ if $L_1$ and $L'$ have the same predicate, and 
	(iii) if $L'\vert_{q'}=x$ and  there is an ancestor variable $y_1$ of $x$ in $(C_1,\pi_1)$, then $L_1\vert_{q_1}=y_1$.
	
	Let $N_0= N \cup \{ (C;\pi)\} \refine N \cup \{ (C;\pi \wedge x \neq t),(C;\pi)\{x \mapsto t\}\}=N_1$.
	If $(C_1,\pi_1)$ is neither $(C;\pi \wedge x \neq t)$ nor $(C;\pi)\{x \mapsto t\}$, then trivially $(C_0,\pi_0)=(C_1,\pi_1)$.
	Otherwise, $(C_1,\pi_1)= (C;\pi \wedge x \neq t)$ or $(C_1,\pi_1)= (C;\pi)\{x \mapsto t\}$.
	Then $(L_1,{q_1})= (L,q)$ or $(L_1,{q_1})=(L,q)\{x \mapsto t\}$.
	In either case,(i) $L'\delta\vert_{q'}$ is an instance of $\skt(L\vert_q,\pi_0)$, 
	(ii) $q=q'$ if $L$ and $L'$ have the same predicate, and 
	(iii) if $L'\vert_{q'}=x$ and there exists an ancestor variable $y$ of $x$ in $(C_0,\pi_0)$, then $L\vert_{q}=y$.
	
	Let $N_0 \mo \Proj{P}(N)=N_1$.
	If $P$ is not the predicate of $L$, then trivially $(L,q)=(L_1,{q_1})$.
	If $P$ is the predicate of $L$, then $(L,q)=(P(t_1,\ldots,t_n),q)$ and $(L_1,{q_1})=(T(f_p(t_1,\ldots,t_n)),1.q)$.
	Thus, (i) $L'\delta\vert_{q'}$ is an instance of $\skt(L\vert_{q},\pi_0) =$ $\skt(T(f_p(t_1,\ldots,t_n)\vert_{1.q},\pi_0)$.
	(ii)  The predicate of $L'$ is not $P$ by definition.
	(iii) Let $L'\vert_{q'}=x$ and $y$ be the ancestor variable of $x$ in $(C_0,\pi_0)$.
	Then,  $y$ is also the ancestor variable of $x$ in $(C_1,\pi_1)$ and $L_1\vert_{q_1}=y$.
	Therefore, $L\vert_q=P(t_1,\ldots,t_n)\vert_{q}=T(f_p(t_1,\ldots,t_n)\vert_{1.q}=L_1\vert_{q_1}=y$.
	
	Let $N_0= N \cup \{ (C;\pi)\} \li N \cup \{ (C_a;\pi_a)\}=N_1  $ where an occurrence of a variable $x$ is replaced by a fresh $x'$.
	If $(C_1,\pi_1) \neq (C_a;\pi_a)$, then trivially $(C_0,\pi_0)=(C_1,\pi_1)$.
	Otherwise, $(C_1,\pi_1) = (C_a;\pi_a)$, $(C_0,\pi_0)=(C,\pi)$.
	By definition, $(L,q)=(L_1\{x' \mapsto x\},q_1)$ and $\pi_0=\pi_1\{x' \mapsto x\}$.
	Thus, $\skt(L\vert_{q},\pi_0)= \skt(L_1\vert_{q_1},\pi_1)$.  
	Therefore, $L'\delta\vert_{q'}$ is an instance of $\skt(L\vert_{q},\pi_0)$.
	Since $L$ and $L_1$ have the same predicate and $q=q_1$, $q=q'$ if $L$ and $L'$ have the same predicate.
	Let $L'\vert_{q'}=z$ and $y$ be the ancestor variable of $z$ in $(C_1,\pi_1)$.
	If $y \neq x'$, then $y$ is the ancestor variable of $z$ in $(C_0,\pi_0)$ and $L\vert_{q}=L_1\{x' \mapsto x\}\vert_{q_1}=y_1$.
	Otherwise, $x$ is the ancestor variable of $z$ in $(C_0,\pi_0)$ and $L\vert_{q}=L_1\{x' \mapsto x\}\vert_{q_1}=x$.
	
	Let $N_0= N \cup \{ (C;\pi)\} \sh N \cup \{ (C_l;\pi_l),(C_r;\pi_r)\}=N_1$ where a term $s$ is extracted from a positive literal $Q(s'[s]_p)$ via introduction of fresh predicate $S$ and variable $x$.
	If $(C_1,\pi_1)$ is neither $(C_l;\pi_l)$ nor $(C_r;\pi_r)$, then trivially $(C_0,\pi_0)=(C_1,\pi_1)$.
	
	If $(C_1,\pi_1)= (C_l;\pi_l)$ and $L_1 = S(x)$,
	then $(C_0,\pi_0)=(C;\pi)$, $q_1=1$, $(L',q')=(S(x),1)$ and $(Q(s'[s]_p),1.p)$ is the parent literal position of $(S(x),1)$.
	Let $L'\delta=S(t)$.
	Because $N^\bot_k$ is complete and ground, there is a clause $C'_k\delta'\in N^\bot_k$ that contains the positive literal $S(t)$.
	The ancestor of $(C'_k,\pi'_k)\in N_k$ in $N_1$ is $(C_r;\pi_r)$  because it is the only clause in $N_1$ with a positive $S$-literal. 
	Then, by the inductive hypothesis, $(S(s),1)$ in $(C_r;\pi_r)$ is the ancestor literal position of $(S(x),1)$ in $(C'_k,\pi'_k)$.
	Thus, $t$ is an instance of $\skt(S(s)\vert_1,\pi_r)=\skt(s,\pi_r)$.
	Therefore, $t=L'\delta\vert_{q'}$ is an instance of  $\skt(Q(s'[s]_p)\vert_{1.p},\pi)=\skt(s,\pi_r)$.
	Further, $Q$ and $S$ are not the same predicate because $S$ is fresh.
	Since $x$ has no parent variable, $L'\vert_{q'}=x$ has no ancestor variable in  $(C_0,\pi_0)$.
	
	If $(C_1,\pi_1)= (C_l;\pi_l)$ and $L_1 = Q(s'[p/x])$,
	then $(C_0,\pi_0)=(C;\pi)$ and $(Q(s'[s]_p),q_1)$ in $(C;\pi)$ is the parent literal position of $(L_1,q_1)$ in $(C_1,\pi_1)$ and ancestor literal position of $(L',q')$ in $(C_k,\pi_k)$.
	If  $q_1$ is not a position at or above $p$, the subterm at $p$ is irrelevant and thus  $\skt(Q(s'[s]_p)\vert_{q_1},\pi)=\skt(Q(s'[p/x])\vert_{q_1},\pi_l)$.
	Otherwise, let $r$ be a position such that $q_1r=1.p$.
	Since $\vert p \vert=2$, no following shallow transformation step extracts a subterm of $s'[p/x]$ containing $x$.
	Thus by definition of $\apr$, $L'=Q(t'[x]_p)$ and $C_k$ also contains the negative literal $S(x)$. 
	Let $S(x)\delta=S(t)$.
	Analogously to the previous case,  $t$ is an instance of $\skt(s,\pi_r)$.
	Combined with $L'\delta\vert_{q'}$ being an instance of $\skt(L_1\vert_{q_1},\pi_1)=\skt(Q(s'[p/x])\vert_{q_1},\pi_l)$ and  $L'\delta\vert_{1.p}=t$,
	$L'\delta\vert_{q'}$ is an instance of  $\skt(Q(s'[s]_p)\vert_{q},\pi)$.
	Since $L$ and $L_1$ have the same predicate and $q=q_1$, $q=q'$ if $L$ and $L'$ have the same predicate.
	Let $L'\vert_{q'}=z$ and $y$ in $(C_1,\pi_1)$ be the ancestor variable of $z$ in $(C_k,\pi_k)$.
	Since $x$ has no parent, $y\neq x$ and $y$ in $(C_0,\pi_0)$ is the ancestor variable of $z$.
	Therefore, $Q(s'[s]_p)\vert_{q_1}=y$ because $Q(s'[p/x])\vert_{q_1}=y$.
	
	If $(C_1,\pi_1)= (C_r;\pi_r)$ and $L_1 = S(s)$, let $q_1=1.q'_1$.
	Then, $(C_0,\pi_0)=(C;\pi)$ and $(L,q)=(Q(s'[s]_p),1.pq'_1)$ in $(C_0,\pi_0)$ is the parent literal position of $(L_1,q_1)$ in $(C_1,\pi_1)$.
	Thus, $L'\delta\vert_{q'}$ is an instance of  $\skt((Q(s'[s]_p)\vert_{1.pq'_1},\pi)=\skt(s\vert_{q'_1},\pi)=\skt(L_1\vert_{q_1},\pi_r)$. 
	Because $S$ is fresh, $Q$ is not the predicate of $L'$.
	Let $L'\vert_{q'}=z$ and $y$ in $(C_1,\pi_1)$ be the ancestor variable of $z$ in $(C_k,\pi_k)$.
	Then, $y$ in $(C_0,\pi_0)$ is the ancestor variable of $z$ 
	and $Q(s'[s]_p)\vert_{q}=s\vert_{q'_1}=y$ because $s\vert_{q'_1}=L_1\vert_{q_1}=y$.
	
	Otherwise, $(L_1,q_1)$ in $(C_0,\pi_0)$ is the parent literal position of $(L_1,q_1)$ in $(C_1,\pi_1)$, by definition. 
	Then, $\skt(L_1,\pi)=\skt(L_1,\pi_l)$ or $\skt(L_1,\pi)=\skt(L_1,\pi_r)$, respectively.\qed
\end{proof}

Next, we define the notion of descendants and descendant relations to connect lift-conflicts in ground conflicting cores with their corresponding ancestor clauses. 
The goal, hereby, is that if a ground clause $D$ is not a descendant of a clause in $N$, then it can never appear in a conflicting core of an approximation of $N$.

\begin{definition}[Descendants]\label{def:refine:descendant}
Let  $N \apr^* N'$,  $[(C;\pi),N] \anc^* [(C';\pi'),N']$ and  $D$ be a ground instance of $(C';\pi')$.
Then, we call $D$ a \emph{descendant} of $(C;\pi)$ and define the $[(C;\pi),N] \anc^*[(C';\pi'),N']$-descendant relation $\des{}{D}{}$ that maps literals in $D$ to literal positions in $(C;\pi)$ using the following rule:
$$ L'\delta \des{}{D}{} (L,r) \text{ if }  L'\delta\in D \text{ and } [r,L,(C;\pi),N] \anc^* [\varepsilon,L',(C';\pi'),N'] $$
\end{definition}

For the descendant relations it is of importance to note that while there are potentially infinite ways that a lift-conflict $C_c$ can be a descendant of an original clause  $(C;\pi)$,
there are only finitely many distinct descendant relations over $C_c$ and $(C;\pi)$.
This means, if a refinement transformation can prevent one distinct descendant relation without generating new distinct descendant relations (Lemma~\ref{lem:refinement:descendants}),
a finite number of refinement steps can remove the lift-conflict $C_c$ from the descendants of $(C;\pi)$  (Lemma~\ref{lem:refinement:refine}).
Thereby, preventing any conflicting cores containing $C_c$ from being found again.

A clause $(C;\pi)$ can have two descendants that are the same except for the names of the $S$-predicates introduced by shallow transformations.
Because the used approximation $N \apr^* N'$ is arbitrary and therefore also the choice of fresh $S$-predicates, 
if $D$ is a descendant of  $(C;\pi)$, then any clause $D'$ equal to $D$ up to a renaming of $S$-predicates is also a descendant of  $(C;\pi)$.
On the other hand, the actual important information about an $S$-predicate is which term it extracts.
Two descendants of $(C;\pi)$ might be identical but their $S$-predicate extract different terms in $(C;\pi)$.
For example, $P(a)\imp S(f(a))$ is a descendant of $P(x),P(y)\imp Q(f(x),g(f(x)))$ but might  extract either occurrence of $f(x)$.  
These cases are distinguished by their respective descendant relations.
In the example, we have either  $S(f(a)) \des{}{D}{} (Q(f(x),g(f(x))),1)$ or $S(f(a)) \des{}{D}{} (Q(f(x),g(f(x))),2.1)$.

\begin{lemma}\label{lem:refinement:descendants}
Let $N_0=N \cup \{(C;\pi)\} \refine  N \cup \{(C;\pi \wedge x \neq t),(C;\pi)\{x \mapsto t\}\}=N_1$ be a refinement transformation and $D$ a ground clause.
If there is a $[(C;\pi \wedge x \neq t),N_1] \anc^*[(C';\pi'),N_2]$- or $[(C;\pi)\{x \mapsto t\},N_1] \anc^*[(C';\pi'),N_2]$-descendant relation $\des{}{D}{1}$,
then there is an equal  $[(C;\pi),N_0] \anc^*[(C';\pi'),N_2]$-descendant relation $\des{}{D}{0}$.
\end{lemma}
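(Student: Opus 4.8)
The plan is to obtain $\des{}{D}{0}$ by \emph{prepending} the single refinement step $N_0 \refine N_1$ to the ancestor chain underlying $\des{}{D}{1}$, and then to check that this prefix does not change which literal of $D$ is sent to which literal position. The refinement step contributes, via Definition~\ref{def:approx:pclause}, the parent-clause link from $(C;\pi)$ in $N_0$ to the relevant refined clause in $N_1$, and, via Definition~\ref{def:approx:pterm}, the parent-literal-position links $[r,L,(C;\pi),N_0] \anc [r,L,(C;\pi\wedge x\neq t),N_1]$ in the constraint case and $[r,L,(C;\pi),N_0] \anc [r,L\{x\mapsto t\},(C;\pi)\{x\mapsto t\},N_1]$ in the instantiation case, for every $(L,r)\in C$. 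Since $\anc^*$ is the transitive closure of $\anc$, composing the appropriate one of these single steps with the assumed chain $[\,\cdot\,,N_1]\anc^*[(C';\pi'),N_2]$ yields $[(C;\pi),N_0]\anc^*[(C';\pi'),N_2]$, so that $D$ is a descendant of $(C;\pi)$ and $\des{}{D}{0}$ is defined; it then remains to compare the two relations literal by literal.

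First I would dispatch the easy case where $\des{}{D}{1}$ is a $[(C;\pi\wedge x\neq t),N_1]\anc^*[(C';\pi'),N_2]$-descendant relation. Here the prepended parent-literal-position step is the identity, so for each $L'\delta\in D$ any witnessing chain $[r,L,(C;\pi\wedge x\neq t),N_1]\anc^*[\varepsilon,L',(C';\pi'),N_2]$ extends by transitivity to $[r,L,(C;\pi),N_0]\anc^*[\varepsilon,L',(C';\pi'),N_2]$, and since this first step is forced, also conversely. Thus $\des{}{D}{0}$ and $\des{}{D}{1}$ send each literal of $D$ to the very same literal position, i.e. they are equal.

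The substantive case is the instantiation one, where $\des{}{D}{1}$ is a $[(C;\pi)\{x\mapsto t\},N_1]\anc^*[(C';\pi'),N_2]$-descendant relation and the prepended step carries $(L,r)$ to $(L\{x\mapsto t\},r)$. The unproblematic sub-case is when the chain under $\des{}{D}{1}$ starts from a position $r\in\pos(L)$: prepending $[r,L,(C;\pi),N_0]\anc[r,L\{x\mapsto t\},(C;\pi)\{x\mapsto t\},N_1]$ immediately gives $L'\delta\des{}{D}{0}(L,r)$ for the corresponding slot, using that $\{x\mapsto t\}$ preserves the list structure of $C$ literal for literal, so that $(L,r)$ in $(C;\pi)$ and $(L\{x\mapsto t\},r)$ in $(C;\pi)\{x\mapsto t\}$ denote the same origin.

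The main obstacle, and the point the proof really turns on, is positions that fall strictly \emph{inside} a substituted copy of $t$: these are positions $r=pw$ of $L\{x\mapsto t\}$ with $L\vert_p=x$ and $w\neq\varepsilon$, which have no counterpart in $\pos(L)$ and hence no direct parent under the refinement step as literally stated. The resolution is to read a literal position up to $\{x\mapsto t\}$ and identify such an $r=pw$ with the position $p$ of the replaced variable $x$ in $L$, so that an $S$-literal extracted from within $t$ under $\des{}{D}{1}$ is sent by $\des{}{D}{0}$ to $(L,p)$; equivalently, one observes that the refinement parent-literal-position relation extends to $[p,L,(C;\pi),N_0]\anc[pw,L\{x\mapsto t\},(C;\pi)\{x\mapsto t\},N_1]$ for $w\in\pos(t)$, after which transitivity finishes the argument exactly as in the easy case. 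The delicate verification is that this identification is well defined and that, under it, the two descendant relations coincide on every literal of $D$; the remainder is routine composition of the parent relations of Definitions~\ref{def:approx:pclause} and~\ref{def:approx:pterm}.
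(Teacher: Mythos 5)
Your core argument coincides with the paper's proof: take a pair $L'\delta \des{}{D}{1} (L,r)$, unfold it to an ancestor chain $[r,L,\cdot\,,N_1] \anc^* [\varepsilon,L',(C';\pi'),N_2]$, prepend the one-step parent relations that Definitions~\ref{def:approx:pclause} and~\ref{def:approx:pterm} attach to the refinement step, and conclude $L'\delta \des{}{D}{0} (L,r)$ by transitivity of $\anc^*$. That is essentially all the paper does: it writes this out for $(C;\pi\wedge x\neq t)$ and declares the case $(C;\pi)\{x\mapsto t\}$ ``analogous''. Your easy case and your ``unproblematic sub-case'' (positions $r\in\pos(L)$) therefore match the paper's proof, and your remark that the first step of any chain out of $(C;\pi)$ is forced (which yields the converse inclusion, hence genuine equality) is a point the paper leaves implicit.

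However, your handling of the ``main obstacle'' --- positions $r=pw$ of $L\{x\mapsto t\}$ strictly below an occurrence of $x$ --- contains a genuine gap. Your fix is to ``observe'' that the refinement parent relation extends to $[p,L,(C;\pi),N_0]\anc[pw,L\{x\mapsto t\},(C;\pi)\{x\mapsto t\},N_1]$ for $w\in\pos(t)$. No such pairs exist: in the instantiation case, Definition~\ref{def:approx:pterm} relates exactly the positions $(L,r)\in C$, i.e.\ $r\in\pos(L)$, to the \emph{same} position $r$ of $L\{x\mapsto t\}$, and a proof cannot enlarge a fixed definition in mid-argument. Worse, even if the extension were adopted, equality would still fail: $\des{}{D}{1}$ would send the relevant literal of $D$ (e.g.\ an $S$-literal produced by a later shallow extraction of a subterm of $t$) to the position $pw$, while your extended $\des{}{D}{0}$ would send it to $p$; these are different literal positions, so the relations agree only after one also coarsens what ``equal'' means for descendant relations --- and that notion cannot be reinterpreted locally, because the number of distinct descendant relations is precisely the induction measure of Lemma~\ref{lem:refinement:refine}. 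So the ``delicate verification'' you defer is not routine; it is the entire content of the hard case. For what it is worth, the paper's own ``analogous'' silently covers only $r\in\pos(L)$ as well, and when $t$ has depth at least two such inside-$t$ descendant pairs can really occur, so the obstacle you flagged is not vacuous --- but flagging it and then assuming it away by redefinition does not close it.
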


\begin{proof}
	Let $L_D $ be a literal of $D$ and $L' \des{}{D}{1} (L,r)$.
	If $D$ is a descendant of $(C;\pi \wedge x \neq t)$, then $[r,L,(C;\pi \wedge x \neq t),N_1] \anc^* [\varepsilon,L',(C';\pi'),N_2]$.
	Because $[r,L,(C;\pi),N_0] \anc [r,L,(C;\pi \wedge x \neq t),N_1]$, $L' \des{}{D}{0} (L,r)$.
	If $D$ is a descendant of $(C;\pi)\{x \mapsto t\}$, the proof is analogous.\qed
\end{proof}

\begin{lemma}[Refinement]\label{lem:refinement:refine}
Let $N \apr  N'$ and $N^\bot$ be a complete ground conflicting core of $N'$. 
If $C_c\in N^\bot$ is a lift-conflict, then  there exists a finite refinement $N \refine^*  N_{R}$
such that for any approximation $N_{R} \apr^* N'_{R}$ and ground conflicting core $N^\bot_{R}$ of $N'_{R}$,
$C_c$ is not a lift-conflict in  $N^\bot_{R}$ modulo duplicate literal elimination.
\end{lemma}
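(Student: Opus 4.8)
The plan is to locate, inside the lift-conflict $C_c$, the single repeated-variable discrepancy that makes it fail to be an instance of the original parent clause, to erase that discrepancy by one refinement step, and then to iterate over the finitely many ways $C_c$ can arise. First I would set $(C_0;\pi_0):=(C;\pi)$ for the parent clause of the step $N \apr N'$ and apply Lemma~\ref{lem:refinement:ancestor_skel} to this single step. Since $C_c$ is a lift-conflict (Definition~\ref{def:refine:conflict}) it is a descendant of $(C;\pi)$ --- it is an instance of the linear result, resp.\ of the shallow resolvent --- but is not an instance of $(C;\pi)$ modulo duplicate literal elimination. Fixing a witnessing descendant relation $\des{}{C_c}{}$, Lemma~\ref{lem:refinement:ancestor_skel}(i) says every literal position of $C_c$ is a ground instance of $\skt(L\vert_q,\pi_0)$ for its ancestor position $(L,q)$; since the skeleton linearises $(C;\pi)$, the only obstruction to $C_c$ being an instance of $(C;\pi)$ is a shared variable. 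Using parts~(ii) and~(iii) of the same lemma to align predicates and variable ancestry, I would extract a variable $x\in\vars(C)$ together with two literals of $C_c$ whose $\des{}{C_c}{}$-images are positions $(L_1,q_1),(L_2,q_2)$ with $L_1\vert_{q_1}=L_2\vert_{q_2}=x$, but whose ground subterms $u_1,u_2$ in $C_c$ satisfy $u_1\neq u_2$.

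Next I would build the separating straight term. Let $p$ be a topmost position at which $u_1$ and $u_2$ differ and let $t$ be the term that follows $u_1$ along the path from the root to $p$, carrying the top symbol of $u_1\vert_p$ at $p$ and pairwise distinct fresh variables (disjoint from $\vars((C;\pi))$) at every off-path argument position. By construction $t$ is straight, $u_1$ is an instance of $t$, and $u_2$ is not. Applying the Refinement rule with $x$ and $t$ replaces $(C;\pi)$ by $(C;\pi\wedge x\neq t)$ and $(C;\pi)\{x\mapsto t\}$. For the second clause every descendant carries an instance of $t$ at the positions below the two $x$-occurrences, ruling out $u_2$; for the first clause the added atomic constraint $x\neq t$ is conjoined to every linearised copy of $x$ (the Linear rule adds $\pi\sigma$ to the constraint, and $\sigma$ fixes the fresh variables of $t$), so every descendant carries a non-instance of $t$ at both $x$-positions, ruling out $u_1$. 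Hence $\des{}{C_c}{}$ is no longer a descendant relation of either child, i.e.\ this particular way of realising $C_c$ as a non-instance has been removed.

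Finally I would close by a finiteness argument phrased entirely in terms of descendant relations. By the remark preceding the lemma there are only finitely many distinct descendant relations over $C_c$ and $(C;\pi)$, and by Lemma~\ref{lem:refinement:descendants} refinement creates no new ones: every descendant relation of a child equals one of the parent. Taking as measure the number of such relations that witness $C_c$ as a non-instance, the previous step shows that each refinement strictly decreases this measure for the refined clause without increasing it for its sibling. Well-founded induction on this measure --- always refining a clause that still admits a witnessing relation --- yields a finite $N \refine^* N_R$ after which no descendant of $(C;\pi)$ admits a witnessing relation over $C_c$. Consequently, for any $N_R \apr^* N'_R$ and ground conflicting core $N^\bot_R$, should $C_c$ occur as a descendant at all it is an instance of the relevant parent clause and therefore not a lift-conflict modulo duplicate literal elimination.

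I expect the main obstacle to be the first step: rigorously deducing the single repeated-variable discrepancy $(x,u_1,u_2)$ from ``not an instance but skeleton-instance at every position'' through the position- and variable-ancestor bookkeeping of Lemma~\ref{lem:refinement:ancestor_skel}, in particular ensuring the two selected literals of $C_c$ really trace back to the \emph{same} variable of $C$ rather than merely to coincidentally equal subterms. A secondary subtlety, already anticipated above, is that refinement may create genuinely new lift-conflicts at the fresh variables introduced by $t$; this is precisely why the measure must count descendant relations of the fixed clause $C_c$ rather than conflict clauses, since those new conflicts concern ground clauses different from $C_c$ and hence leave the measure untouched.
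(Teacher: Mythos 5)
Your proposal matches the paper's proof essentially step for step: isolate a parent variable whose two occurrences receive different ground terms in $C_c$, separate them by a straight term built along the path to a position where those ground terms carry different function symbols, refine on that variable and term, and then induct on the number of distinct descendant relations over $C_c$ and $(C;\pi)$, using Lemma~\ref{lem:refinement:descendants} to show no new relations arise and Lemma~\ref{lem:refinement:ancestor_skel} to kill the witnessing one---exactly as the paper does. The obstacle you anticipate in your first step is in fact absent in the paper's treatment: since the conflict clause $(C_a;\pi_a)$ equals $(C;\pi)$ modulo the renaming $\tau$ (up to duplicate literal elimination), a failure of $C_c=C_a\delta$ to be an instance of $(C;\pi)$ immediately yields $x_i\delta\neq x'_i\delta$ for some $x_i$ in the domain of $\tau$, so the two discrepant positions trace to the same parent variable by construction, with no need to route the extraction through the skeleton lemma.
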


\begin{proof}
	Let $(C_a,\pi_a)$ be the conflict clause of $C_c$ and  $(C;\pi)\in N$ be the parent clause of $(C_a,\pi_a)$.
	$C_c$ is a descendant of $(C;\pi)$ with the corresponding $[(C;\pi),N] \anc [(C_a;\pi_a),N']$-descendant relation $\des{}{C_c}{0}$.
	We apply induction on the number of distinct $[(C;\pi),N] \anc^* [(C';\pi'),N'']$-descendant relations $\des{}{C_c}{}$ for arbitrary approximations $N \apr^*  N''$.
	
	Since only the shallow and linear transformations can produce lift-conflicts,
	the clause $(C;\pi)$ is replaced by either a linearized clause $(C';\pi')$ or 
	two shallow clauses $(C_l;\pi)$ and $(C_r;\pi)$.
	Then, the conflict clause $(C_a;\pi_a)$ of $C_c$ is either the linearized $(C';\pi')$ or the resolvent of $(C_l;\pi)$ and $(C_r;\pi)$.
	In either case, $C_c=C_a\delta$ for some solution $\delta$ of $\pi_a$.
	Furthermore, there exists a substitution $\tau=\{x'_1 \mapsto x_1,\ldots,x'_n \mapsto x_n\}$  such that $(C;\pi)$ and $(C_a;\pi_a)\tau$ are equal modulo duplicate literal elimination.
	That is, $\tau= \{x'\mapsto x\}$ for a linear transformation and $\tau=\rho^{-1}$ for shallow transformation (Definition~\ref{def:approx:resolvent}).
	
	Assume $C_c=C_a\tau\sigma$ for some grounding substitution $\sigma$, where $\tau\sigma$ is a solution of $\pi_a$.
	Thus, $\sigma$ is a solution of $\pi_a\tau$, which is equivalent to $\pi$.
	Then, $C_c$ is equal to $C\sigma$ modulo duplicate literal elimination an instance of $(C;\pi)$,
	which contradicts with $C_c$ being a lift-conflict.
	Hence, $C_c=C_a\delta$ is not an instance of $C_a\tau$ and thus, $x_i\delta \neq x'_i\delta$ for some $x_i$ in the domain of $\tau$.
	
	Because $x_i\delta $ and $ x'_i\delta$ are ground, there is a position $p$ where $x_i\delta\vert_p $ and $x'_i\delta\vert_p$ have different function symbols.
	We construct the straight term $t$ using the path from the root to $p$ on $x_i\delta$ with variables that are fresh in $(C,\pi)$.   
	Then, we can use $x_i$ and $t$ to segment $(C;\pi)$ into $(C;\pi \wedge x_i \neq t)$ and $(C;\pi)\{x_i \mapsto t\}$ for the refinement $N \refine  N_{R}$.
	Note, that $x_i\delta$ is a ground instance of $t$, while  $x'_i\delta$  is not.
	
	Let $(L'_1,r'_1)$ and $(L'_2,r'_2)$ in $(C_a,\pi_a)$ be literal positions of the variables $x_i$ and $x'_i$ in $C_a$,
	and $(L_1,r_1)$ and $(L_2,r_2)$ in $(C,\pi)$ be the parent literal positions  of $(L'_1,r'_1)$ and $(L'_2,r'_2)$, respectively. 
	Because $(C_a,\pi_a)\tau$ is equal to $(C;\pi)$ modulo duplicate literal elimination, $L_1\vert_{r_1} = L_2\vert_{r_2}=x_i$. 
	Let $N \refine N_1$ be the refinement where $(C;\pi)$ is segmented into $(C;\pi \wedge x_i \neq t)$ and $(C;\pi)\{x_i \mapsto t\}$. 
	
	By Lemma~\ref{lem:refinement:descendants}, all $[(C;\pi \wedge x_i \neq t),N_1] \anc^* [(C'_a;\pi'_a),N_2]$- or $[(C;\pi)\{x_i \mapsto t\},N_1] \anc^* [(C'_a;\pi'_a),N_2]$-descendant relations
	correspond to an equal $[(C;\pi),N] \anc [(C'_a;\pi'_a),N_2]$-descendant relation.
	Assume there is a $[(C;\pi \wedge x_i \neq t),N_1] \anc^* [(C'_a;\pi'_a),N_2]$-descendant relation $\des{}{C_c}{1}$ that is not distinct from  $\des{}{C_c}{0}$.
	Because  $L'_1\delta \des{}{C_c}{0} (L_1,r)$ for some literal position $(L_1,r)$ in $(C;\pi)$, which is the parent literal position of  $(L_1,r)$ in $(C;\pi \wedge x_i \neq t)$,
	$L'_1\delta \des{}{C_c}{1} (L_1,r)$.
	However, this contradicts Lemma~\ref{lem:refinement:ancestor_skel} because $x_i\delta $ is not an instance of $\skt(L_1\vert_{r_1},\pi \wedge x_i \neq t)=\skt(x_i,\pi \wedge x_i \neq t)$.
	The case that there is a  $[(C;\pi)\{x_i \mapsto t\},N_1] \anc^* [(C'_a;\pi'_a),N_2]$-descendant relation that is not distinct from  $\des{}{C_c}{0}$ is analogous using the argument that $x'_i\delta $ is not an instance of $\skt(L_2\{x_i \mapsto t\}\vert_{r_2},\pi)=\skt(t,\pi)$.
	Hence, there are strictly less distinct descendant relations over $C_c$ and $(C;\pi \wedge x \neq t)$ or $(C;\pi)\{x \mapsto t\}$
	than there are distinct descendant relations over $C_c$ and $(C,\pi)$.
	
	If there are no descendant relations, then $C_c$ can no longer appear as a lift conflict.  
	Otherwise, by the inductive hypothesis, there exists a finite refinement $N \refine  N_1 \refine^*  N_{R}$
	such that for any approximation $N_{R} \apr N'_{R}$ and ground conflicting core $N^\bot_{R}$ of $N'_{R}$,
	$C_c$ is not a lift-conflict in $N^\bot_{R}$ modulo duplicate literal elimination.\qed
\end{proof}

\begin{theorem}[Soundness and Completeness of FO-AR]\label{theo:refinement:scfoar}
Let $N$ be an unsatisfiable clause set and $N'$ its MSL(SDC) approximation: (i)~if $N$ is unsatisfiable then there exists a conflicting
core of $N'$ that can be lifted to a refutation in $N$, (ii)~if $N'$ is satisfiable, then $N$ is satisfiable too.
\end{theorem}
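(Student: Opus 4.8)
The plan is to prove the two directions by quite different means: (ii) will be an immediate consequence of the soundness results already in hand, whereas (i) combines the completeness of SDC-resolution with the refinement machinery of Lemmas~\ref{lem:refinement:descendants} and~\ref{lem:refinement:refine}, and it is there that the real difficulty lies.

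For (ii) I would argue in a single step. The approximation $N \apr^* N'$ is assembled from $\mo$, $\sh$, $\li$ and $\refine$, each of which is an over-approximation by Lemma~\ref{lem:approx:sound}. Hence Corollary~\ref{cor:approx:sound} applies verbatim: given a model $\I{}$ of $N'$ it returns the model $\R{\Sigma}(\I{})$ of $N$, so satisfiability of $N'$ forces satisfiability of $N$.

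For (i) I would first manufacture a candidate conflicting core and then decide whether it lifts. Since (ii) is the contrapositive of ``$N$ unsatisfiable $\Rightarrow$ every approximation is unsatisfiable'', the set $N'$ is unsatisfiable; by Lemma~\ref{lem:decide:saturation} it has a finite saturation and by Lemma~\ref{lem:decide:complete} that saturation contains $\square$, so collecting the ground premises of a ground refutation yields a complete ground conflicting core $N^\bot$ of $N'$. The decisive dichotomy is whether $N^\bot$ contains a lift-conflict. If it does not, then by Definition~\ref{def:refine:conflict} every clause of $N^\bot$ is, modulo duplicate literal elimination, an instance of its parent clause in $N$ and every shallow resolvent respectively linear descendant reverses under $\rho^{-1}$ respectively $\{x' \mapsto x\}$ exactly as in the over-approximation proof of Lemma~\ref{lem:approx:sound}; threading these inverses back through the ground refutation turns it into a refutation in $N$, which is the liftable core claimed.

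If instead some $C_c \in N^\bot$ is a lift-conflict, I would invoke Lemma~\ref{lem:refinement:refine} to obtain a finite refinement $N \refine^* N_R$ after which $C_c$ can never recur as a lift-conflict, replace $N$ by the satisfiability-equivalent $N_R$ (Lemma~\ref{lem:approx:sound} together with the remark that $\refine$ is satisfiability preserving), and repeat the whole construction. The \emph{main obstacle} is to show that this outer loop halts. The intended measure is the number of distinct descendant relations carried by the conflict clauses: Lemma~\ref{lem:refinement:descendants} guarantees that a refinement step creates no new descendant relation, while the proof of Lemma~\ref{lem:refinement:refine} strictly removes one, and for a fixed ground clause and fixed original clause only finitely many distinct descendant relations exist. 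The subtlety I expect to fight is that later iterations may expose genuinely new conflict clauses, so a single per-$C_c$ measure does not obviously suffice; to close this gap I would impose a fairness discipline that always extracts a minimal conflicting core and argue that an infinite run would, in the limit over the ever-finer yet satisfiability-equivalent refinements $N_0 \refine N_1 \refine \cdots$, assemble via the partial model construction $\I{}$ a model of $N$, contradicting its unsatisfiability. Making this termination argument precise, rather than the lifting step itself, is where I expect the bulk of the work to go.
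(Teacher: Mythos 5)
Your proposal takes essentially the same route as the paper: part~(ii) is exactly the appeal to Lemma~\ref{lem:approx:sound} (via Corollary~\ref{cor:approx:sound}), and part~(i) is the paper's intended ``lift the core or refine via Lemma~\ref{lem:refinement:refine} and repeat'' argument. The termination obstacle you flag for the outer loop is genuine, but the paper does not close it either --- its proof is explicitly labelled an idea, consists of citing these two lemmas and then restating the proof of Lemma~\ref{lem:refinement:refine} verbatim, and defers the fair-strategy/dynamic-completeness argument to \cite{Teucke2015} --- so your proposal matches the paper's argument and is, if anything, more candid about where the unresolved work lies.
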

\begin{proof}(Idea) 
	By Lemma~\ref{lem:approx:sound} and Lemma~\ref{lem:refinement:refine}, where the latter can be used to show that a core of $N'$ that cannot
	be lifted also excludes the respective instance for unsatisfiability  of $N$.
	
	Let $(C_a,\pi_a)$ be the conflict clause of $C_c$ and  $(C;\pi)\in N$ be the parent clause of $(C_a,\pi_a)$.
	$C_c$ is a descendant of $(C;\pi)$ with the corresponding $[(C;\pi),N] \anc [(C_a;\pi_a),N']$-descendant relation $\des{}{C_c}{0}$.
	We apply induction on the number of distinct $[(C;\pi),N] \anc^* [(C';\pi'),N'']$-descendant relations $\des{}{C_c}{}$ for arbitrary approximations $N \apr^*  N''$.
	
	Since only the shallow and linear transformations can produce lift-conflicts,
	the clause $(C;\pi)$ is replaced by either a linearized clause $(C';\pi')$ or 
	two shallow clauses $(C_l;\pi)$ and $(C_r;\pi)$.
	Then, the conflict clause $(C_a;\pi_a)$ of $C_c$ is either the linearized $(C';\pi')$ or the resolvent of $(C_l;\pi)$ and $(C_r;\pi)$.
	In either case, $C_c=C_a\delta$ for some solution $\delta$ of $\pi_a$.
	Furthermore, there exists a substitution $\tau=\{x'_1 \mapsto x_1,\ldots,x'_n \mapsto x_n\}$  such that $(C;\pi)$ and $(C_a;\pi_a)\tau$ are equal modulo duplicate literal elimination.
	That is, $\tau= \{x'\mapsto x\}$ for a linear transformation and $\tau=\rho^{-1}$ for shallow transformation (Definition~\ref{def:approx:resolvent}).
	
	Assume $C_c=C_a\tau\sigma$ for some grounding substitution $\sigma$, where $\tau\sigma$ is a solution of $\pi_a$.
	Thus, $\sigma$ is a solution of $\pi_a\tau$, which is equivalent to $\pi$.
	Then, $C_c$ is equal to $C\sigma$ modulo duplicate literal elimination an instance of $(C;\pi)$,
	which contradicts with $C_c$ being a lift-conflict.
	Hence, $C_c=C_a\delta$ is not an instance of $C_a\tau$ and thus, $x_i\delta \neq x'_i\delta$ for some $x_i$ in the domain of $\tau$.
	
	Because $x_i\delta $ and $ x'_i\delta$ are ground, there is a position $p$ where $x_i\delta\vert_p $ and $x'_i\delta\vert_p$ have different function symbols.
	We construct the straight term $t$ using the path from the root to $p$ on $x_i\delta$ with variables that are fresh in $(C,\pi)$.   
	Then, we can use $x_i$ and $t$ to segment $(C;\pi)$ into $(C;\pi \wedge x_i \neq t)$ and $(C;\pi)\{x_i \mapsto t\}$ for the refinement $N \refine  N_{R}$.
	Note, that $x_i\delta$ is a ground instance of $t$, while  $x'_i\delta$  is not.
	
	Let $(L'_1,r'_1)$ and $(L'_2,r'_2)$ in $(C_a,\pi_a)$ be literal positions of the variables $x_i$ and $x'_i$ in $C_a$,
	and $(L_1,r_1)$ and $(L_2,r_2)$ in $(C,\pi)$ be the parent literal positions  of $(L'_1,r'_1)$ and $(L'_2,r'_2)$, respectively. 
	Because $(C_a,\pi_a)\tau$ is equal to $(C;\pi)$ modulo duplicate literal elimination, $L_1\vert_{r_1} = L_2\vert_{r_2}=x_i$. 
	Let $N \refine N_1$ be the refinement where $(C;\pi)$ is segmented into $(C;\pi \wedge x_i \neq t)$ and $(C;\pi)\{x_i \mapsto t\}$. 
	
	By Lemma~\ref{lem:refinement:descendants}, all $[(C;\pi \wedge x_i \neq t),N_1] \anc^* [(C'_a;\pi'_a),N_2]$- or $[(C;\pi)\{x_i \mapsto t\},N_1] \anc^* [(C'_a;\pi'_a),N_2]$-descendant relations
	correspond to an equal $[(C;\pi),N] \anc [(C'_a;\pi'_a),N_2]$-descendant relation.
	Assume there is a $[(C;\pi \wedge x_i \neq t),N_1] \anc^* [(C'_a;\pi'_a),N_2]$-descendant relation $\des{}{C_c}{1}$ that is not distinct from  $\des{}{C_c}{0}$.
	Because  $L'_1\delta \des{}{C_c}{0} (L_1,r)$ for some literal position $(L_1,r)$ in $(C;\pi)$, which is the parent literal position of  $(L_1,r)$ in $(C;\pi \wedge x_i \neq t)$,
	$L'_1\delta \des{}{C_c}{1} (L_1,r)$.
	However, this contradicts Lemma~\ref{lem:refinement:ancestor_skel} because $x_i\delta $ is not an instance of $\skt(L_1\vert_{r_1},\pi \wedge x_i \neq t)=\skt(x_i,\pi \wedge x_i \neq t)$.
	The case that there is a  $[(C;\pi)\{x_i \mapsto t\},N_1] \anc^* [(C'_a;\pi'_a),N_2]$-descendant relation that is not distinct from  $\des{}{C_c}{0}$ is analogous using the argument that $x'_i\delta $ is not an instance of $\skt(L_2\{x_i \mapsto t\}\vert_{r_2},\pi)=\skt(t,\pi)$.
	Hence, there are strictly less distinct descendant relations over $C_c$ and $(C;\pi \wedge x \neq t)$ or $(C;\pi)\{x \mapsto t\}$
	than there are distinct descendant relations over $C_c$ and $(C,\pi)$.
	
	If there are no descendant relations, then $C_c$ can no longer appear as a lift conflict.  
	Otherwise, by the inductive hypothesis, there exists a finite refinement $N \refine  N_1 \refine^*  N_{R}$
	such that for any approximation $N_{R} \apr N'_{R}$ and ground conflicting core $N^\bot_{R}$ of $N'_{R}$,
	$C_c$ is not a lift-conflict in $N^\bot_{R}$ modulo duplicate literal elimination.\qed
\end{proof}

Actually,  Lemma~\ref{lem:refinement:refine} can be used to define a fair strategy on refutations in $N'$ in order
to receive also a dynamically complete FO-AR calculus, following the ideas presented in \cite{Teucke2015}.

In Lemma~\ref{lem:refinement:refine}, we segment the conflict clause's immediate parent clause.
If the lifting later successfully passes this point, the refinement is lost and will be possibly repeated.
Instead, we can refine any ancestor of the conflict clause as long as it contains the ancestor of the variable used in the refinement.
By Lemma~\ref{lem:refinement:ancestor_skel}-(iii), such an ancestor will contain the ancestor variable at the same positions. 
If we refine the ancestor in the original clause set, the refinement is permanent because lifting the refinement steps always succeeds.
Only variables introduced by shallow transformation cannot be traced to the original clause set.
However, these shallow variables are already linear and the partitioning in the shallow transformation can be chosen such that they are not shared variables.
Assume a shallow, shared variable $y$, that is used to extract the term $t$, in the shallow transformation 
of $\Gamma\imp E[s]_p,\Delta$ into $S(x),\Gamma_l\imp E[p/x],\Delta_l$ and $\Gamma_r\imp S(s),\Delta_r$.
Since $\Delta_l$ $\dot{\cup}$ $\Delta_r=\Delta$ is a partitioning, $y$ can only appear in either   $E[p/x],\Delta_l$ or $S(s),\Delta_r$.
If $y \in \vars(E[p/x],\Delta_l)$ we instantiate $\Gamma_r$ with $\{y \mapsto t\}$ and  $\Gamma_l$, otherwise.
Now, $y$ is no longer a shared variable.\\

The refinement Lemmas only guarantee a refinement for a given ground conflicting core.
In practice, however, conflicting cores contain free variables. 
We can always generate a ground conflicting core by instantiating the free variables with ground terms.
However, if we only exclude a single ground case via refinement, next time the new conflicting core will likely have overlaps with the previous one.
Instead, we can often remove all ground instances of a given conflict clause at once.   

The simplest case is when unifying the conflict clause with the original clause fails because their instantiations differ at some equivalent positions.
For example, consider $N= \{ P(x,x); P(f(x,a), f(y,b)) \imp\}$. 
$N$ is satisfiable but the linear transformation is unsatisfiable with conflict clause $P(f(x,a), f(y,b))$ 
which is not unifiable with $P(x,x)$, because the two terms $f(x,a)$ and $f(y,b)$ have different constants at the second argument.
A refinement of $ P(x,x)$ is \newline
\centerline{$\begin{array}{r@{\,;\,}l}
 ( P(x,x) & x \neq f(v,a)) \\
(P(f(x,a),f(x,a)) & \top) \\
\end{array}$}
 $ P(f(x,a), f(y,b))$ shares no ground instances with the approximations of the refined clauses. 

Next, assume that again unification fails due to structural difference, but this time the differences lie at different positions.
For example, consider $N= {\{ P(x,x); P(f(a,b), f(x,x)) \imp\}}$. 
$N$ is satisfiable but the linear transformation of $N$ is unsatisfiable with conflict clause $  P(f(a,b), f(x,x))$
 which is not unifiable with $ P(x,x)$ because in $f(a,b)$ the first an second argument are different but the same in $f(x,x)$.
A refinement of $ P(x,x)$ is \newline
\centerline{$\begin{array}{r@{\,;\,}l}
	( P(x,x) & x \neq f(a,v)) \\
	(P(f(a,x), f(a,x))) & x \neq a) \\
	( P(f(a,a), f(a,a))) & \top)  \\
	\end{array}$}
 $  P(f(a,b), f(x,x))$ shares no ground instances with the approximations of the refined clauses. 

It is also possible that the conflict clause and original clause are unifiable by themselves, but the resulting constraint has no solutions. 
For example, consider $N= {\{  P(x,x); (P(x, y) \imp; x \neq a \wedge x \neq b \wedge y \neq c \wedge y \neq d  )\}}$ with signature $\Sigma=\{a,b,c,d\}$. 
$N$ is satisfiable but the linear transformation of $N$ is unsatisfiable with conflict clause $ (\imp P(x,y); x \neq a \wedge x \neq b \wedge y \neq c \wedge y \neq d)$.
While $P(x,x)$ and $P(x,y)$ are unifiable, the resulting constraint $ x \neq a \wedge x \neq b \wedge x \neq c  \wedge x \neq d$ has no solutions.
A refinement of $P(x,x)$ is \newline
\centerline{$\begin{array}{r@{\,;\,}l}
	(P(x,x) & x \neq a \wedge x \neq b) \\
	( P(a,a) & \top) \\
	(P(b,b) & \top) \\
	\end{array}$}
 $ ( P(x,y); x \neq a \wedge x \neq b \wedge y \neq c \wedge y \neq d)$ shares no ground instances with the approximations of the refined clauses.\\

Lastly, we should mention that there are cases where the refinement process does not terminate.
For example, consider the clause set $N= {\{  P(x,x) ; P(y,g(y)) \imp \}}$. 
$N$ is satisfiable but the linear transformation of $N$ is unsatisfiable with conflict clause $ P(y,g(y))$, 
which is not unifiable with $ P(x,x)$.
A refinement of $ P(x,x)$ based on the ground instance $P(a,g(a))$ is \newline
\centerline{$\begin{array}{r@{\,;\,}l}
 ( P(x,x) & x \neq g(v)) \\
( P(g(x),g(x)) & \top) \\
\end{array}$}
While $ P(y,g(y))$ is not an instance of the refined approximation,
it shares ground instances with $ P(g(x),g(x'))$. 
The new conflict clause is $ P(g(y),g(g(y)))$ and 
the refinement will continue to enumerate all $ P(g^i(x),g^i(x))$ instances of $ P(x,x)$ without ever reaching a satisfiable approximation. 
Satisfiability of first-order clause sets is undecidable, so termination cannot be expected by any calculus, in general.

\section{Experiments}\label{sec:experiments}

In the following we discuss several first-order clause classes for which FO-AR implemented in SPASS-AR immediately
decides satisfiability but
superposition and instantiation-based methods fail. 
We argue both according to the respective calculi and state-of-the-art implementations,
in particular SPASS~3.9~\cite{DBLP:conf/cade/WeidenbachDFKSW09}, Vampire~4.1~\cite{KovacsVoronkov13,DBLP:conf/cav/Voronkov14}, 
for ordered-resolution/superposition,
iProver~2.5~\cite{DBLP:conf/cade/Korovin08} an  implementation of Inst-Gen \cite{Korovin13ganzinger}, and
Darwin~v1.4.5 \cite{DBLP:journals/ijait/BaumgartnerFT06} an implementation of the  model evolution calculus \cite{DBLP:conf/cade/BaumgartnerT03}.
All experiments were run on
a 64-Bit Linux computer (Xeon(R) E5-2680, 2.70GHz, 256GB main memory).
For Vampire and Darwin we chose the CASC-sat and CASC settings, respectively. 
For iProver we set the schedule to ``sat'' and SPASS, SPASS-AR were used in default mode.
Please note that Vampire and iProver are portfolio solvers including implementations of several different calculi including
superposition (ordered resolution), instance generation, and finite model finding.
SPASS, SPASS-AR, and Darwin only implement superposition, FO-AR, and model evolution, respectively.

For the first example\newline
\centerline{$P(x,y) \imp P(x,z) , P(z,y);\quad P(a,a)$}
and second example,\newline
\centerline{$Q(x,x);\quad
 Q(v,w) , P(x,y)  \imp P(x,v) , P(w,y);\quad
 P(a,a)$}
the superposition calculus produces independently of the selection strategy and ordering
an infinite number of clauses of form\newline
\centerline{$\begin{array}{r@{\,\imp\,}l}
&P(a,z_1),\; P(z_1,z_2),\;\ldots,\;P(z_n,a).\\
\end{array}$}

Using linear approximation, however, FO-AR replaces  $P(x,y) \imp P(x,z) , P(z,y)$ and $\imp Q(x,x)$ with
$P(x,y) \imp P(x,z) , P(z',y)$ and $\imp Q(x,x')$, respectively.
Consequently, ordered resolution derives $\imp P(a,z_1) , P(z_2,a)$ which subsumes any further inferences $\imp P(a,z_1)  , P(z_2,z_3) , P(z_4,a)$.
Hence, saturation of the approximation terminates immediately.
Both examples belong to the Bernays-Sch\"onfinkel fragment, so
model evolution (Darwin) and Inst-Gen (iProver) can decide them as well. 
Note that the concrete behavior of superposition is not limited to the above examples 
but potentially occurs whenever there are variable chains in clauses. 

On the third problem\newline
\centerline{$P(x,y) \imp P(g(x),z);\quad
P(a,a)$}
superposition derives all clauses of the form $\imp P(g(\ldots g(a)\ldots),z)$.
With a shallow approximation of $ P(x,y) \imp P(g(x),z)$ into $ S(v) \imp P(v,z)$ and $ P(x,y) \imp S(g(x))$,
FO-AR (SPASS-AR) terminates after deriving $ \imp S(g(a))$ and $S(x) \imp S(g(x))$.
Again, model evolution (Darwin)  and Inst-Gen (iProver) can also solve this example.

The next example\newline
\centerline{$P(a);\quad P(f(a))\imp;\quad
  P(f(f(x)))  \imp P(x);\quad
  P(x) \imp P(f(f(x)))$}
is already saturated under superposition. 
For FO-AR,
the clause $P(x) \imp P(f(f(x)))$ is replaced by $S(x) \imp P(f(x))$ and $P(x) \imp S(f(x))$.
Then ordered resolution terminates after inferring $S(a) \imp$ and $S(f(x)) \imp P(x)$.

The Inst-Gen and model evolution calculi, however, fail. 
In either, a satisfying model is represented by a finite set of literals, i.e,
a model of the propositional approximation for Inst-Gen 
and the trail of literals in case of model evolution.
Therefore, there necessarily exists a literal $P(f^n(x))$ or $\neg P(f^n(x))$ with a maximal $n$ in these models.  
This contradicts the actual model where either $P(f^n(a))$ or $P(f^n(f(a)))$ is true.
However, iProver can solve this problem using its built-in ordered resolution solver whereas
Darwin does not terminate on this problem.

Lastly consider an example of the form\newline
\centerline{$f(x) \approx x  \imp;\,\;
f(f(x)) \approx x  \imp;\,\, \ldots;
f^n(x) \approx x  \imp$}
which is trivially satisfiable, e.g., saturated by superposition, but any model has at least $n+1$ domain elements.
Therefore, adding these clauses to any satisfiable clause set containing $f$ forces calculi that explicitly
consider finite models to consider at least $n+1$ elements. The performance of final model finders~\cite{SlaneySurendonk96} typically degrades
in the number of different domain elements to be considered.

Combining each of these examples into one problem is then 
solvable by neither superposition, Inst-Gen, or model evolution
and not practically solvable with increasing $n$ via testing finite models.
For example, we tested\newline
\centerline{$\begin{array}{c}
  P(x,y) \imp P(x,z) , P(z,y); \quad P(a,a); \quad P(f(a),y) \imp; \\
  P(f(f(x)),y) \imp P(x,y);\quad P(x,y) \imp P(f(f(x)),y);  \\
  f(x) \approx x \imp;, \ldots, f^{n}(x) \approx x \imp;\\
  \end{array}$}
for $n=20$ against SPASS, 
Vampire, 
iProver, 
and Darwin for more than one hour each without success.
Only SPASS-AR solved it in less than one second. 

For iProver we added an artificial positive equation $b\approx c$. For otherwise,
iProver throws away all disequations while preprocessing. This is a satisfiability
preserving operation, however, the afterwards found (finite) models are not models of the
above clause set due to the collapsing of ground terms.

\section{Conclusion} \label{sec:conclusion}

The previous section showed FO-AR is superior to superposition,
instantiation-based methods on certain classes of clause sets. Of course,
there are also classes of clause sets where superposition and instantiation-based methods
are superior to FO-AR, e.g., for unsatisfiable clause sets where the structure
of the clause set forces FO-AR to enumerate failing ground instances
due to the approximation in a bottom-up way.

Our prototypical implementation SPASS-AR cannot compete with systems such as iProver
or Vampire on the respective CASC categories of the TPTP~\cite{Sut09}. This 
is already due to the fact that they
are all meanwhile portfolio solvers. For example, iProver contains an implementation
of ordered resolution and Vampire an implementation of Inst-Gen. Our results, Section~\ref{sec:experiments}, however,
show that these systems may benefit from FO-AR by adding it to their portfolio.

The DEXPTIME-completeness result for MSLH strongly suggest that both the MSLH and also our
MSL(SDC) fragment have the finite model property. However, we are not aware of
any proof. If MSL(DSC) has the finite model property, the finite model finding approaches are complete
on MSL(SDC). The models generated by FO-AR and superposition
are typically infinite.
It remains an open problem, even for fragments enjoying the finite model property,
e.g., the first-order monadic fragment, to design a calculus that combines explicit finite
model finding with a structural representation of infinite models.
For classes that have no finite models this problem seems to become even more 
difficult. To the best of our knowledge, SPASS is currently the only prover
that can show satisfiability of the clauses $R(x,x)\imp$; $R(x,y), R(y,z)\imp R(x,z)$;
$R(x,g(x))$ due to an implementation of chaining~\cite{BachmairGanzinger98,SudaEtAl10}. 
Apart from the superposition calculus, it is unknown to us how the
specific inferences for transitivity can be combined with any of the other discussed calculi,
including the abstraction refinement calculus introduced in this paper.

Finally, there are not many results on calculi that operate with respect to
models containing positive equations. Even for fragments that are decidable
with equality, such as the Bernays-Schoenfinkel-Ramsey fragment or the monadic
fragment with equality, there seem currently no convincing suggestions compared
to the great amount of techniques for these fragments without equality. Adding
positive equations to MSL(SDC) while keeping decidability is, to the best of
our current knowledge, only possible for at most linear, shallow equations
$f(x_1,\ldots,x_n) \approx h(y_1,\ldots,y_n)$~\cite{JacquemardMeyerEtAl98}. However, approximation into
such equations from an equational theory with nested term occurrences typically
results in an almost trivial equational theory. So this does not seem to be 
a very promising research direction.


\bibliographystyle{plain}
\bibliography{abstractions}

\end{document}